\newtheorem{thm}{Theorem}[section]
\newtheorem{theorem}{Theorem}[section]
\newtheorem*{thm*}{Theorem}
\newtheorem*{cor*}{Corollary}
\newtheorem{lem*}{Lemma}
\newtheorem{proposition}[thm]{Proposition}
\newtheorem*{con*}{Conjecture}
\newtheorem*{prob*}{Problem}
\theoremstyle{definition} 
\newtheorem{definition}{Definition}
\newtheorem{assumption}{Assumption}
\newtheorem{example}{Example}
\newtheorem{remark}[thm]{Remark}
\theoremstyle{remark}
\newcommand{\bbC}{\mathbb{C}}
\newcommand{\bbE}{\mathbb{E}}
\newcommand{\bbP}{\mathbb{P}}
\newcommand{\bbQ}{\mathbb{Q}}
\newcommand{\bbR}{\mathbb{R}}
\newcommand{\m}[1]{{\mathcal{#1}}}
\providecommand{\abs}[1]{\lvert#1\rvert}
\newcommand{\comment}[1]{}
\newcounter{exercise}
\title{On a dynamic adaptation of the Distribution Builder approach to investment decisions}
\author{Phillip Monin\footnote{Department of Mathematics, The University of Texas at Austin, Austin, TX 78712. Email: pmonin@math.utexas.edu.}}
\date{First draft: June 2011 \\ This draft: November 2012}
\begin{document}

\maketitle
\abstract{

Sharpe \emph{et al.} proposed the idea of having an expected utility maximizer choose a probability distribution for future wealth as an input to her investment problem instead of a utility function. They developed a computer program, called \emph{The Distribution Builder}, as one way to elicit such a distribution. In a single-period model, they then showed how this desired distribution for terminal wealth can be used to infer the investor's risk preferences. We adapt their idea, namely that a risk-averse investor can choose a desired distribution for future wealth as an alternative input attribute for investment decisions, to continuous time. In a variety of scenarios, we show how the investor's desired distribution combines with her initial wealth and market-related input to determine the feasibility of her distribution, her implied risk preferences, and her optimal policies throughout her investment horizon. We then provide several examples.\\

\emph{Keywords:} inferring preferences, Distribution Builder, expected utility, forward investment performance  
}

\section{Introduction}

Theoretical models in single-agent investment are traditionally based on the classical criterion of maximal expected utility of wealth. Despite its long history and sound economic foundations, however, expected utility as a criterion for practical investment choice 
faces many obstacles due to various difficulties for its specification. Some of these difficulties have been addressed by making simplifying or ad hoc assumptions.  Asset managers, for instance, often make two such assumptions. First, they assume that the investor has constant relative risk aversion. They then use so-called risk tolerance quizzes to approximate the investor's relative risk aversion coefficient. 

Alternatively, one can focus on observable features of investors' behavior. For instance, \citet{bla-1968}, among others, proposed to essentially bypass the utility concept altogether and, instead, 
use the investor's initial choice of optimal investment as the criterion to determine future optimal allocations. In a related direction, several papers have studied the specification of utility if one knows \emph{a priori} the optimal allocations that are consistent with this utility 
(see, among others, \citet{cox-lel-2000}, \citet{he-hua-1994}, \citet{dyb-rog-1997} and \citet{cox-hob-obl-2012}). 

Sharpe and his collaborators took a different point of view in \citet{sha-gol-bly-2000}, \citet{sha-2001}, and \citet{gol-joh-sha-2008}. 
They argued that, in practical situations, investors can express desires about the distribution of their future wealth. 
To gather such distributional data, they developed a computer program, called \emph{The Distribution Builder}, whose output is a probability 
distribution that the investor desires for her future wealth. Then, in a single-period model and under the assumption that the investor implicitly maximizes her 
expected utility of terminal wealth, Sharpe \emph{et al.} showed how this desired distribution can be used to recover the investor's risk 
preferences.

Our work is inspired and motivated by this approach. The aim herein is to provide a dynamic adaptation of their idea, which is to use a risk-averse investor's desired distribution for future wealth, rather than a utility function, as an input for optimal investment. Given an investor's desired distribution for future wealth and her initial endowment, we study the following issues: if this distribution can be achieved in the market, how it is achieved, and, finally, the risk preferences that are consistent with this choice of distribution. As in the work of Sharpe \emph{et al.}, we address, in a practical way, both the 
normative issue of instructing investors how to achieve their goals as well as the theoretical question of how to infer risk 
preferences that are consistent with investment targets.

Given that we work beyond a single-period setting, the time at which the investor wants to achieve her desired distribution is 
an important input parameter in the analysis. We consider two scenarios. In the first, we assume that the investor implicitly 
maximizes her expected utility of terminal wealth in a \emph{fixed horizon setting}, by which we mean that the investor has a finite 
and fixed investment horizon that is specified when investment begins. Within the fixed horizon setting, we consider two subcases 
depending on whether the investor targets her distribution for terminal wealth or for wealth 
at some intermediate time. This scenario is appropriate for an investor who is certain about the length of her investment horizon and is not interested in exploring investment opportunities beyond it while she is investing. In the second scenario, we assume that the investor operates in a \emph{flexible horizon setting}, 
by which we mean that the time at which investment ends is not predetermined and could be finite or infinite. The investor places her chosen distribution for wealth at some arbitrary future time. This scenario is appropriate for an investor who does not want to commit at initial time to a fixed investment horizon, or plans to invest for a very long time.

The market environment that we consider consists of risky stocks and a riskless money market account. The stock prices are modeled as geometric Brownian motions with time-varying deterministic coefficients.

Our results are as follows. In the fixed horizon setting, we show that the desired distribution, the investor's initial wealth, and market-related input are sufficient to explicitly determine the feasibility of the investor's choice of distribution, the optimal strategy the investor should follow to attain her goal, and the investor's terminal marginal utility function. We obtain these results regardless of whether the investor targets her distribution for terminal wealth or for wealth at an intermediate time.

We obtain analogous results for the flexible horizon setting. Here, the terminal-horizon expected maximal utility criterion needs to be modified, and for this we use the so-called monotone forward investment performance criterion. Again, we show that the investor's desired distribution, her initial wealth, and market-related input are sufficient to determine the feasibility of the distribution, the strategy that achieves it, and her risk preferences.

In the fixed horizon setting, the method of proof relies on known representation results for the optimal wealth process in terms of the solution to the heat equation and on the work of Widder on inverting the Weierstrass transform. In the flexible horizon setting, it is shown that the investor's distribution, initial wealth, and market input determine the Fourier transform of a particular Borel measure that is known to characterize all objects of interest in the model under the monotone forward investment performance investment criterion.

Our results show that in our model, a desired distribution for wealth at a \emph{single} future time, when combined with the investor's initial wealth and an estimate of the market price of risk throughout the investment horizon, explicitly determines the investor's risk preferences, her optimal policies throughout, and the feasibility of her chosen distribution. This result holds regardless of whether the investor is a classical expected utility maximizer with a fixed investment horizon or whether she uses the monotone forward investment performance criterion with a flexible investment horizon.

The paper is organized as follows. In section \ref{sect: sharpe et al}, we review the method underlying \emph{The Distribution Builder}. In section \ref{sect: model and background}, we present the continuous-time model and relevant background results on the expected utility and monotone forward investment performance investment criteria. In section \ref{sect: fixed}, we consider targeted wealth distributions in the fixed horizon setting, while in section \ref{sect: flexible} we consider targeted wealth distributions in the flexible horizon setting. We provide conclusions and directions for future research in section \ref{sect: conclusions}.

\section{Single-period investment model and its Distribution Builder}\label{sect: sharpe et al}

To motivate the reader, we review the model setting and the method of \emph{The Distribution Builder} developed by Sharpe \emph{et al.} (see \citet{sha-gol-bly-2000}, \citet{sha-2001}, and \citet{gol-joh-sha-2008}). Therein, three key model assumptions were made: i) the state price density is solely expressed in terms of the stock price, ii) the investor is implicitly an expected utility maximizer, but specifies her desired future wealth distribution instead of her utility function, and iii) the investor wants to obtain her desired distribution in a so-called cost-efficient manner. We elaborate on their model and on these assumptions next.

The model is a single-period one having $N>2$ distinct possible states $\Omega:=\{\omega_i\}_{i=1}^{N}$, each occurring with equal probability $\bbP\{\omega_i\}=\frac{1}{N},\;  i=1,\dots,N$. The market consists of one riskless money market and one risky stock. The former has initial price $B_0=1$ and is assumed to offer constant interest rate $r>0$, i.e. $B_T(\omega_i)=(1+r),\; i=1,\dots,N$. 

The stock has initial price $S_0=1$ and its terminal values in the $N$ states are determined by a discrete approximation to a lognormal distribution. This is accomplished as follows. The logarithmic return of the stock is assumed to be normally distributed with mean $\mu>0$ and standard deviation $\sigma>0$. The resulting continuous distribution is then lognormal and can be approximated by selecting $N$ points with probablities $\frac{1}{2N},\frac{3}{2N},\dots,\frac{2N-1}{2N}$ from the inverse of its cumulative distribution function. This in turn produces the vector $S_T$ of $N$ equally probable states. Without loss of generality, it is assumed that the states are in nondecreasing order,
\begin{equation}\label{eq: sharpe stock increasing}
 S_T(\omega_i)\leq S_T(\omega_{i+1}),\qquad  i=1,\dots,N-1.
\end{equation}
Moreover, to preclude arbitrage in this model, the familiar assumption $S_T(\omega_1)<1+r<S_T(\omega_N)$ is introduced.

The market admits a state price density vector $\xi_T$, which is not unique because of incompleteness. Sharpe \emph{et al.} then make the ad hoc assumption that the logarithm of the vector $\xi_T$ satisfies the linear relationship
\begin{equation}\label{eq: llpw}
 \log(\xi_T(\omega_i))=a+b\log(S_T(\omega_i)),\qquad i=1,\dots, N,
\end{equation}
for some constants $a$ and $b$. To find these constants, one uses the identities
\begin{equation*}
 \frac{1}{N}\sum_{i=1}^{N}\xi_T(\omega_i) = \frac{1}{1+r}\qquad\text{and}\qquad \frac{1}{N}\sum_{i=1}^{N}\xi_T(\omega_i)S_T(\omega_i)=S_0=1,\qquad i=1,\dots,N,
\end{equation*}
to derive the equation
\begin{equation}\label{eq: sharpe nonlinear b}
 (1+r)\sum_{i=1}^{N} S_T^b(\omega_i) = \sum_{i=1}^{N}S_T^{b+1}(\omega_i).
\end{equation}
This equation then determines $b$ and using \eqref{eq: llpw} we, in turn, find $a$. It is easily shown that if $\mu>r$ then the solution $b$ to \eqref{eq: sharpe nonlinear b} exists, is unique, and is strictly negative. \comment{In turn, by \eqref{eq: sharpe stock increasing} and \eqref{eq: llpw}, it follows that the vector $\xi_T$ is nonincreasing, i.e. $\xi_T(\omega_i)\geq \xi_T(\omega_{i+1}),\;  i=1,\dots,N-1$.}

The assumption that the stock price and state price density are related as in \eqref{eq: llpw} seems at first to be restrictive and arbitrary. This relationship, however, is consistent with widely used models of asset prices, examples of which include multiperiod iid binomial models in discrete time and the classical Black-Scholes-Merton model in continuous time (see \citet{sha-2001} for further discussion).

In this market environment, the investor starts with initial wealth $x_0>0$ and sets an investment goal, namely a probability distribution denoted by $F$, for her terminal wealth. As we describe in detail below, the issue of whether $F$ can be attained depends on $x_0$ and on market-related input. To achieve an attainable distribution, the investor chooses at initial time how much money $\pi$ to allocate to the risky asset, with the remaining quantity $x_0-\pi$ invested in the money market. Her wealth at time $T$ is, then, given by the random variable (recall $S_0=1$)
\begin{equation*}
 X_T(\omega) = \pi S_T(\omega) + (x_0-\pi)(1+r).
\end{equation*}

The wealth distribution $F$ is characterized by its probability mass function, namely
\begin{equation*}
\bbP(X_T=x_i)=\frac{n_i}{N},\qquad n_i\in\{0,1,\dots,N\},\qquad i=1,\dots,N. 
\end{equation*}
Therefore, $F$ can be viewed as an $N$-vector, $\bar{X}^F=\{x_i^F\}_{i=1}^N$, of wealth values where, for each $i=1,\dots,N$, we assign $n_i$ values equal to $x_i$.  Without loss of generality, the values of $\bar{X}^F$ are assumed to be in nondecreasing order, i.e. $x^F_i\leq x^F_{i+1},\; i=1,\dots,N$.  Given this assumption, there is a one-to-one correspondence between the distribution $F$ and the wealth vector $\bar{X}^F$, in the sense that for every distribution $F$ there is a given wealth vector $\bar{X}^F$, and vice-versa.

To find a terminal wealth random variable $X_T$ with a given distribution $F$, one associates each of the $N$ values in the vector $\bar{X}^F$ with one of the $N$ states of the world. There are $N!$ possible such bijections and each has a potentially different associated cost. For fixed $j=1,\dots, N!$, let $X^j_T\colon\Omega\to \bar{X}^F$ be such a bijection. Then, the cost of the distribution $F$ attained using the random variable $X_T^j$ is found by computing the inner product $C(j)$, defined by
\begin{equation*}
C(j)=\frac{1}{N}\sum_{i=1}^N \xi_T(\omega_i)X^j_T(\omega_i).                                                                                                                                                                                                                                                                                                                                                                                                                                                                                                                                                                                                                                                                                                                                                                                                                                                                                                                                                                                                   
 \end{equation*}

Sharpe \emph{et al.} assume that the investor is implicitly choosing a distribution that maximizes her expected utility of terminal wealth. In a complete market, it is well known that the optimal strategy of an investor who maximizes expected terminal utility is \emph{cost-efficient}, i.e. it achieves the so-called \emph{distributional price}  
\begin{equation}\label{eq: sharpe distributional price}
 P_D(F):=\min_{j=1,\dots,N!} C(j)
\end{equation}
of the distribution $F$ (see \citet{dyb-1988a} and \citet{dyb-1988b}). This is not true, however, in the incomplete market herein. The optimal strategy is not necessarily cost-efficient. Nevertheless, Sharpe \emph{et al.} assume that the investor does prefer to obtain her desired distribution $F$ using a cost-efficient strategy. One can then use the results of \citet{dyb-1988a} to deduce that the strategy $j^*$, defined by
\begin{equation}\label{eq: sharpe random variable}
X^{j^*}_T(\omega_i)=x_i^F,\qquad  i=1,\dots,N, 
\end{equation}
is cost-efficient. Moreover, if $j^*$ also satisfies $C(j^*)\leq x_0$, then it corresponds to the optimal investment strategy for the investor maximizing her expected utility of terminal wealth.
 
We are now ready to review the results of Sharpe \emph{et al.} on how to infer points on the investor's marginal utility curve from her desired distribution $F$. Given a wealth distribution $F$, one first determines the random variable $X_T^{j^*}$ via \eqref{eq: sharpe random variable}. Points along the marginal utility curve are then determined by the first order conditions of the investor's utility maximization problem, which are
\begin{equation}\label{eq: sharpe marginal utility}
 U_T'(X_T^{j^*}(\omega_i))=k \xi_T(\omega_i),\qquad   i=1,\dots,N,
\end{equation}
and
\begin{equation*}
 k\left( C(j^*)-x_0\right)=0,
\end{equation*}
where $k\geq0$ is the Lagrange multiplier associated with the budget constraint $C(j)\leq x_0$. We recall that the strict positivity of the marginal utility function $U_T'$ guarantees that $k>0$ and, therefore, the budget constraint is binding, i.e. $C(j^*)=x_0$. Hence, it is optimal for an expected utility maximizer to select a distribution $F$ whose distributional price in \eqref{eq: sharpe distributional price} is equal to her entire initial budget $x_0$.

To summarize, the investor chooses a distribution $F$ for her terminal wealth that she would like to achieve by investing her initial wealth $x_0>0$. It is assumed that the investor would like to achieve this distribution in a cost-efficient manner and that she implicitly maximizes the expected utility of her terminal wealth. These assumptions then determine the budget constraint that $F$ must satisfy, namely
\begin{equation*}
 x_0 = C(j^*) = \frac{1}{N}\sum_{i=1}^N \xi_T(\omega_i)x_i^F,
\end{equation*}
where $\{x_i^F\}_{i=1}^N$ is the representation of $F$ as an $N$-vector as described above. Furthermore, the pointwise specification of the investor's optimal terminal wealth random variable is given by \eqref{eq: sharpe random variable}. The investor's risk preferences are then described by an $N$-point approximation of the investor's marginal utility curve given by \eqref{eq: sharpe marginal utility}. Finally, the model (a one period model with $N$ possible states) is incomplete for $N>2$, and so it is not possible to uniquely determine the optimal initial allocation $\pi$ to the risky stock.

\subsection{The Distribution Builder interface: How a user selects a desired distribution for her future wealth}

We briefly discuss an example using \emph{The Distribution Builder} so that the reader will be acquainted with one possible procedure for choosing a desired distribution for future wealth. We note, however, that in our continuous-time work herein we assume that the investor chooses a distribution for future wealth, but we do not investigate specific ways or tools she might use for this purpose.

The following example comes from a specific application of \emph{The Distribution Builder}, namely to elicit a desired probability distribution for the user's income per year following retirement.  The interface for this application of \emph{The Distribution Builder} is pictured in figure \ref{fig: distribution builder}. The vertical axis of percentages corresponds to the percentage of pre-retirement income that will be realized annually in retirement. For example, if the investor earned \$100,000 in the year before retirement, the 75\% row corresponds to a subsequent annual retirement income of \$75,000.  

In an experimental setting, users are told that some reference point, which is 75\% in figure \ref{fig: distribution builder}, is a typically recommended goal for annual retirement income. The reference row can then be calibrated to represent the level of wealth that can be attained with certainty by investing in the risk-free asset.

\begin{figure}
\begin{center}
\includegraphics[scale=5.00]{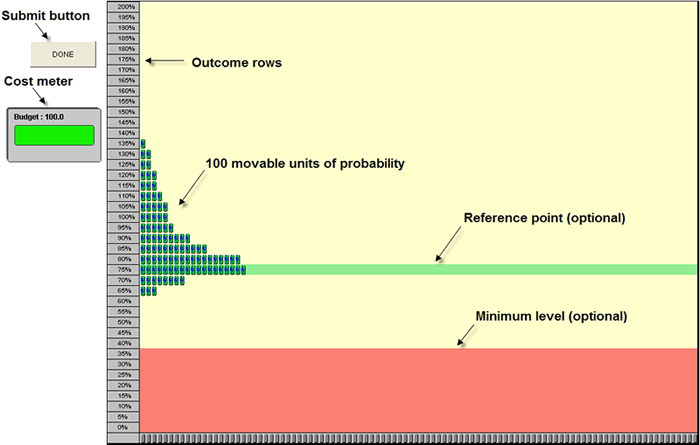}
\end{center}
\caption{The Distribution Builder User Interface. Reprinted from \citet{sha-gol-bly-2000}.}
\label{fig: distribution builder}
\end{figure}

The main area of the interface contains 100 markers, which are initially positioned along the bottom of the screen. Each marker represents an equally-likely state of the world, and the user is told that her realized outcome is represented by one of these markers. \comment{ although she certainly does not know which one in advance.} Users are only able to submit distributions of a given fixed cost (expressed as a percentage), and the cost meter on the left hand side of the interface adjusts accordingly as the user places markers along the vertical axis. The user can submit a distribution of markers only when the cost meter indicates that between 99 and 100 percent of the total fixed budget has been consumed. When satisfied with a particular distribution that meets the cost requirement, the user submits it and the computer then removes all but one of the markers, so that the user is able to experience the actual realization of her desired distribution.

\section{The continuous-time model and background results on investment performance criteria}\label{sect: model and background}

We describe the market setting in which our investor operates, as well as known results on related investment performance criteria. The background results concerning these criteria will be used in the fixed horizon setting in section 4 and the flexible horizon setting in section 5.

The market is complete and consists of a riskless money market and $d$ risky assets driven by $d$ independent Brownian motions.  The risky assets are modeled by time-dependent geometric Brownian motions on $\bbR^d$, i.e. for $i=1,\dots, d$, the price $S^i_t,t\geq0$, of the $i$-th risky asset satisfies
\begin{equation}\label{eq: stock price dynamics}
 dS_t^i = S_t^i \left(\mu^i(t)dt+\sum_{j=1}^d \sigma^{ji}(t)dW_t^j\right),\qquad S_0^i>0,
\end{equation}
where $\mu^i(t)$ and $\sigma^{ji}(t)$ are deterministic functions of time for $i,j=1,\dots, d$, and $t\geq0$. Here, $W=(W^1,\dots,W^d)$ is a $d$-dimensional standard Brownian motion (regarded as a column vector) defined on a complete filtered probability space $(\Omega,\m{F},(\m{F}_t)_{t\geq0},\bbP)$ where the filtration $(\m{F}_t)_{t\geq0}$ satisfies the usual conditions.  It is assumed that $\mu^i(t)$ and $\sigma^{ji}(t)$ are uniformly bounded in $t\geq0$, for all $i,j$.  For brevity, we write $\sigma(t)$ to denote the volatility matrix, i.e. the $d\times d$ matrix $(\sigma^{ji}(t))$ whose $i$-th column represents the volatility $\sigma^i(t)=(\sigma^{1i}(t),\dots,\sigma^{di}(t))$ of the $i$-th risky asset. We also assume that the matrix function $\sigma(t)$ is invertible for all $t\geq0$, and we will write this inverse as $\sigma^{(-1)}(t)$. We can then alternatively write \eqref{eq: stock price dynamics} as
\begin{equation}\label{eq: stock price dynamics 2}
 dS_t^i = S_t^i(\mu^i(t)\;dt+\sigma^i(t)\cdot dW_t).
\end{equation}
The riskless money market has price process $B_t,t\geq0,$ satisfying $B_0=1$ and
\begin{equation}\label{eq: money market price dynamics}
 dB_t = r(t)B_tdt,
\end{equation}
for a nonnegative time-dependent interest rate function $r(t), t\geq0$, which is assumed to be uniformly bounded in $t\geq0$.  We denote by $\mu(t)$ the $d\times 1$ vector with coordinates $\mu^i(t)$ and by $\mathbf{1}$ the $d$-dimensional vector with every component equal to one.

We define the function $\lambda(t),t\geq0,$ by
\begin{equation}\label{eq: lambda}
 \lambda(t):=(\sigma^\top(t))^{(-1)}(\mu(t)-r(t)\mathbf{1}),
\end{equation}
and we will occasionally refer to it as the \emph{market price of risk}.  

\begin{assumption}\label{assump: market price of risk}
The function $\lambda(t),t\geq0,$ is continuous and uniformly bounded on $t\geq0$. Furthermore, its Euclidean norm, $|\lambda(t)|,t\geq0$, is H\"{o}lder continuous, and there exist positive constants $c_0$ and $c_1$ such that $0<c_0\leq |\lambda(t)|\leq c_1$ for all $t\geq0$.
\end{assumption}

Starting at time $t_0=0$ with initial endowment $x_0>0$, the investor invests dynamically in the risky assets and the riskless one. The present values of the amounts invested in the assets are denoted by $\pi_t^i,\;i=1,\dots,d$, and by $\pi_t^0$, respectively. The present value of her total investment is then given by $X_t^\pi = \sum_{i=0}^d \pi_t^i$, which we will refer to as the discounted wealth generated by the (discounted) strategy $\pi=(\pi_t^0,\pi_t^1,\dots,\pi_t^d)$. The investment strategies $\pi$ play the role of control processes and are assumed to be self-financing. Using \eqref{eq: stock price dynamics 2}, \eqref{eq: money market price dynamics} and \eqref{eq: lambda} we deduce
\begin{equation}\label{eq: wealth eqn}
 dX_t^\pi = \sigma(t)\pi_t\cdot ( \lambda(t)dt+dW_t),\qquad t>0,
\end{equation}
where $\pi_t=(\pi_t^i; i=1,\dots, d)$ is a column vector. 

The investor selects a portfolio process from an admissibility set $\m{A}$. A detailed description of this set is given in the upcoming sections.

Finally, we introduce the auxiliary market input processes $A_t$ and $M_t$, $t\geq0$, defined by
\begin{equation}\label{eq: mkt inp proc}
 A_t=\int_0^t \abs{\lambda(s)}^2 ds \qquad\text{and}\qquad M_t=\int_0^t\lambda(s)\cdot dW_s.
\end{equation}
We also recall the martingale $Z_t,t\geq0$, given by
\begin{equation}\label{eq: state price density process}
 Z_t = \exp\left\{ -\int_0^t\lambda(s)\cdot dW_s - \frac{1}{2}\int_0^t\abs{\lambda(s)}^2ds\right\} = \exp\left\{ -M_t-\frac{1}{2}A_t\right\}.
\end{equation}

\subsection{Background results on classical expected utility theory}

We briefly review background results on the classical expected utility theory. These results will be relevant in the fixed horizon setting considered in section 4.

The investor invests in $[0,T]$, with $T>0$ being arbitrary but fixed. She derives utility only from terminal wealth, with objective
\begin{equation}\label{eq: classical value function}
 v(x_0,0):=\sup_{\pi\in\m{A}_T}\bbE\left[U_T(X_T^\pi)|X_0^\pi=x_0\right].
\end{equation}
The set of admissible policies $\m{A}_T$ is defined as the set of $\m{F}_t$-progressively measurable and self-financing portfolio processes $\pi_t,t\in[0,T],$ such that $\bbE\int_0^T\abs{\sigma(s)\pi_s}^2ds<\infty$, and $X_t^\pi\geq0,t\in[0,T],$ $\bbP$-a.s., where $X_t^\pi$ solves \eqref{eq: wealth eqn}. We will call an investor with the above investment paradigm a \emph{Merton investor}. 

The utility function $U_T(\cdot)$ satisfies the following standard assumptions.

\begin{assumption}\label{assumption on utility}
\begin{enumerate}
\item[(i)] The function $U_T\colon(0,\infty)\to\bbR$ is twice continuously differentiable, strictly increasing, and strictly concave.
\item[(ii)]The Inada conditions,
\begin{equation}\label{eq: inada classical utility}
 \lim_{x\downarrow 0}U_T'(x)=\infty\qquad\text{and}\qquad \lim_{x\uparrow\infty}U_T'(x)=0,
\end{equation}
are satisfied
\item[(iii)] The inverse, $I_T\colon(0,\infty)\to(0,\infty)$, of the investor's marginal utility function $U_T'$ has polynomial growth, i.e. there is a constant $\gamma>0$ such that
\begin{equation}\label{eq: I poly growth}
 I_T(y)\leq\gamma + y^{-\gamma}.
\end{equation}
\end{enumerate}
\end{assumption}

The stochastic optimization problem \eqref{eq: classical value function} has been extensively studied and completely solved (see, for example, \citet{kar-shr-1998}). 

The following result relates the Merton investor's optimal wealth process and optimal portfolio process to the solution of the heat equation. It is well known that the optimal policies in this model can be written in terms of a solution to a linear parabolic terminal value problem (see, for example, \citet[Lemma 8.4 (p. 122)]{kar-shr-1998}), but the idea of writing the optimal policies specifically in terms of the solution of the heat equation first appeared in \citet{kal-2011} in the lognormal setting. We state the results of \citet{kal-2011} next.

\begin{proposition}\label{prop: classical utility heat equation}
Let $x_0>0$ be the investor's initial wealth and let $\lambda(t)$ be as in \eqref{eq: lambda}. Let $h\colon\bbR\times[0,T]\to(0,\infty)$ be the unique solution to
\begin{equation}\label{eq: classical utility h pde}
 \left\{ \begin{array}{ll} h_t+\frac{1}{2}|\lambda(t)|^2h_{xx}=0,\quad& (x,t)\in\bbR\times[0,T)\\
          h(x,T)=I_T\left( e^{-x}\right),& x\in\bbR,
         \end{array}\right.
\end{equation}
with $I_T$ satisfying \eqref{eq: I poly growth}. Then, the following hold.

$i)$ The optimal wealth process $X_t^*,t\in[0,T]$, is given by
\begin{equation}\label{eq: classical utility optimal wealth}
 X_t^* = h\left( h^{(-1)}(x_0,0)+A_t+M_t,t\right),\qquad t\in[0,T],
\end{equation}
where $A_t$ and $M_t$, $t\in[0,T]$, are defined in \eqref{eq: mkt inp proc} and $h^{(-1)}$ is the spatial inverse of $h$.  

$ii)$ The optimal portfolio process $\pi^*_t,t\in[0,T],$ that generates $X^*_t$ is given by
\begin{equation}\label{eq: classical utility optimal portfolio}
 \pi_t^* = h_x\left( h^{(-1)}\left(X_t^*,t\right),t\right)\sigma^{(-1)}(t)\lambda(t),\qquad t\in[0,T].
\end{equation}
\end{proposition}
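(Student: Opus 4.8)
The plan is to combine the martingale (duality) method for the complete-market Merton problem with an explicit heat-kernel representation of $h$ obtained by a deterministic time change. The starting point, which may be quoted from \citet{kar-shr-1998}, is that the optimizer of \eqref{eq: classical value function} has terminal wealth $X_T^* = I_T(kZ_T)$, where $Z_T$ is the state price density \eqref{eq: state price density process} and $k>0$ is the Lagrange multiplier fixed by the budget constraint $\bbE[Z_T I_T(kZ_T)] = x_0$; the polynomial growth bound \eqref{eq: I poly growth} together with the boundedness of $\lambda$ in Assumption \ref{assump: market price of risk} guarantees that the relevant expectations are finite. In a complete market $Z_tX_t^*$ is a martingale, so the optimal wealth process is recovered as $X_t^* = Z_t^{-1}\bbE[Z_T X_T^*\mid\m{F}_t]$.

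First I would record the heat-kernel representation of $h$. The change of clock $\tau = A_T - A_t = \int_t^T|\lambda(s)|^2\,ds$ turns \eqref{eq: classical utility h pde} into the standard heat equation $h_\tau = \tfrac12 h_{xx}$ with initial datum $h(x,0)=I_T(e^{-x})$, so that
\begin{equation*}
 h(x,t) = \frac{1}{\sqrt{2\pi(A_T-A_t)}}\int_{-\infty}^{\infty} I_T(e^{-z})\exp\left\{-\frac{(z-x)^2}{2(A_T-A_t)}\right\}\,dz.
\end{equation*}
The bound \eqref{eq: I poly growth} makes $I_T(e^{-z})$ grow at most exponentially, which both guarantees convergence of this Gaussian integral (hence existence, smoothness, and uniqueness of $h$ in the natural growth class) and shows that $h(\cdot,t)$ is strictly increasing, since $I_T = (U_T')^{-1}$ is strictly decreasing and $e^{-z}$ is strictly decreasing; thus the spatial inverse $h^{(-1)}(\cdot,t)$ exists.

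Next I would identify the conditional expectation with $h$. Writing $\phi := h^{(-1)}(x_0,0)$ and conditioning on $\m{F}_t$, the increment $M_T-M_t$ is centered Gaussian with variance $A_T-A_t$ given $\m{F}_t$; expressing $Z_T = Z_t\exp\{-(M_T-M_t)-\tfrac12(A_T-A_t)\}$ and completing the square reduces $Z_t^{-1}\bbE[Z_T I_T(kZ_T)\mid\m{F}_t]$ to exactly the Gaussian integral above evaluated at the spatial point $\phi + A_t + M_t$, provided $k$ is matched to $\phi$ through $k = e^{-\phi - \frac12 A_T}$. Evaluating at $t=0$ shows that this is precisely the choice enforcing $X_0^* = h(\phi,0) = x_0$, i.e. the budget constraint, and yields \eqref{eq: classical utility optimal wealth}. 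For the portfolio \eqref{eq: classical utility optimal portfolio} I would apply It\^o's formula to $X_t^* = h(\phi + A_t + M_t, t)$, using $d(A_t+M_t) = |\lambda(t)|^2\,dt + \lambda(t)\cdot dW_t$ and the PDE \eqref{eq: classical utility h pde} to cancel the drift down to $dX_t^* = h_x\,\lambda(t)\cdot(\lambda(t)\,dt + dW_t)$; comparing with the self-financing wealth equation \eqref{eq: wealth eqn} forces $\sigma(t)\pi_t^* = h_x\,\lambda(t)$, which is \eqref{eq: classical utility optimal portfolio} after multiplying by $\sigma^{(-1)}(t)$ and using $h^{(-1)}(X_t^*,t) = \phi + A_t + M_t$.

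The main obstacle is the analytic bookkeeping that makes the martingale step rigorous, rather than the algebra of the Gaussian integral. Specifically, I expect the work to lie in (a) verifying that $h$ and $h_x$ inherit enough growth control from \eqref{eq: I poly growth} that the candidate $(X^*,\pi^*)$ lies in the admissibility class $\m{A}_T$ (square integrability of $\sigma(t)\pi_t^*$ and nonnegativity of $X_t^*$), and (b) upgrading the local martingale $Z_tX_t^*$ --- whose dynamics $d(Z_tX_t^*) = Z_t(\sigma(t)\pi_t^* - X_t^*\lambda(t))\cdot dW_t$ carry no drift --- to a true martingale, so that $\bbE[Z_TX_T^*] = x_0$ and the first-order condition $U_T'(X_T^*) = kZ_T$ can be promoted to global optimality via the standard concavity inequality $U_T(X_T)\le U_T(X_T^*) + kZ_T(X_T - X_T^*)$. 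Both points are controlled by the bounds on $\lambda$ in Assumption \ref{assump: market price of risk} and the polynomial growth \eqref{eq: I poly growth}, but they are where care is required.
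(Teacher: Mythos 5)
Your proposal is correct and follows essentially the same route as the paper's appendix proof: both start from the duality representation $X_t^*=Z_t^{-1}\bbE[Z_TI_T(kZ_T)\mid\m{F}_t]$ with $k$ fixed by the budget constraint, identify this with the solution of the terminal value problem \eqref{eq: classical utility h pde} (you by computing the Gaussian convolution directly after the time change $\tau=A_T-A_t$, the paper by quoting the Cauchy problem for $\psi(y,t)=\bbE_\bbQ[I_T(yZ_T/Z_t)]$ from Karatzas--Shreve and substituting $h(x,t)=\psi(e^{-x+\frac{1}{2}A_t},t)$), and then obtain the portfolio by It\^o's formula and coefficient matching, with admissibility handled by \eqref{eq: I poly growth} and the bounds on $\lambda$. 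Your bookkeeping of the Lagrange multiplier ($k=e^{-\phi-\frac{1}{2}A_T}$ for the normalized terminal condition $I_T(e^{-x})$) is consistent with the paper's use of the affine-invariance remark, so no substantive difference remains.
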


\subsection{Background results on forward investment performance processes}\label{sect: background forward}

We now review results on the so-called forward investment performance process. These results will be relevant for the flexible investment horizon setting of section \ref{sect: flexible}. The forward investment performance process is an investment selection criterion developed by Musiela and Zariphopoulou (see, among others, \citet{mus-zar-2006,mus-zar-2009,mus-zar-2010}) as a complementary alternative to the maximal expected utility theory. The main motivation for this approach is the ability to work in flexible investment horizon settings and define for them time-consistent performance criteria for all times. In this framework, an admissible investment strategy is deemed optimal if it generates a wealth process whose average performance is maintained over time. In other words, the average performance of the optimal strategy at any future date, conditional on today's information, preserves the performance of this strategy up until today. Any strategy that fails to maintain the average performance over 
time is then suboptimal. In contrast to the expected utility criterion considered earlier, the forward investor does not specify her risk preferences for some terminal time. Instead, her risk preferences are specified at initial time by an initial datum $u_0$ and then evolve dynamically forward in time for $t\geq0$. 

Next, we recall the forward investment performance process. The set of admissible strategies, $\m{A}$, is defined to be the set of $\m{F}_t$-progressively measurable and self-financing portfolio processes $\pi_t,t\geq0,$ such that $\bbE\int_0^t\abs{\sigma(s)\pi_s}^2ds<\infty,t>0$, and $X_t^\pi\geq0,t\geq0,\;\bbP-a.s.$, where the discounted wealth process solves \eqref{eq: wealth eqn}.
 
\begin{definition}\label{defn: fpp}
 Let $u_0\colon(0,\infty)\to\bbR$ be strictly concave and strictly increasing. An $\m{F}_t$-adapted process $U(x,t)$ is a forward investment performance if, for $t\geq 0$ and $x\in(0,\infty)$:
\begin{enumerate}
 \item[(i)] $U(x,0)=u_0(x)$,
\item[(ii)] the map $x\mapsto U(x,t)$ is strictly concave and strictly increasing,
\item[(iii)] for each $\pi\in\m{A}$, $\bbE[ U(X_t^\pi,t)^+]<\infty$ and $\bbE[U(X_s^\pi,s)|\m{F}_t]\leq U(X_t^\pi,t),\; s\geq t$,
\item[(iv)] there exists $\pi^*\in\m{A}$ for which $\bbE[U(X_s^{\pi^*},s)|\m{F}_t]=U(X_t^{\pi^*},t),\; s\geq t.$
\end{enumerate}
\end{definition}

We refer the reader to \citet{mus-zar-2009} as well as \citet{kal-2011} for further discussion on the forward investment performance and its similarities and differences with the classical value function.

\subsubsection{Review of monotone forward investment performance processes}\label{sect: forward review monotone}

We focus herein on the class of time-decreasing forward investment performance processes that will be used in our analysis in section \ref{sect: flexible}. These processes were introduced in \citet{mus-zar-2009} and \citet{ber-rog-teh-2009} and further analyzed in \citet{mus-zar-2010}. Therein, it was shown that time-decreasing forward investment performance processes $U(x,t)$ are constructed by compiling market-related input with a deterministic function of space and time. Specifically, for $t\geq 0$, we have
\begin{equation}\label{eq: u fpp}
 U(x,t) = u(x,A_t)
\end{equation}
where $A_t,t\geq 0,$ is as in \eqref{eq: mkt inp proc} and $u(x,t)$ is a smooth function that is spatially strictly increasing and strictly concave, and satisfies
\begin{equation}\label{eq: u pde}
 \left\{ \begin{array}{ll} \displaystyle u_t - \frac{1}{2}\frac{u_x^2}{u_{xx}}=0, \quad& (x,t)\in(0,\infty)\times(0,\infty) \vspace{.5em}\\   u(x,0)=u_0(x), & x\in(0,\infty) \end{array}\right.
\end{equation}
where $u_0\colon(0,\infty)\to\bbR$ is the initial datum of Definition \ref{defn: fpp}.

It is also shown in \citet{mus-zar-2010} that if $h(x,t)$ is defined via the transformation
\begin{equation}\label{eq: u and h relation}
 u_x(h(x,t),t) = e^{-x + \frac{t}{2}},\qquad (x,t)\in\bbR\times[0,\infty),
\end{equation}
then it is a positive and spatially strictly increasing space-time harmonic function, solving the ill-posed heat equation
\begin{equation}\label{eq: h pde}
 \left\{ \begin{array}{ll} \displaystyle h_t + \frac{1}{2}h_{xx}=0, & (x,t)\in \bbR\times[0,\infty)\vspace{.5em}\\ h(x,0)=(u_0')^{(-1)}(e^{-x}),\quad & x\in\bbR.\end{array}\right.
\end{equation}
Moreover, the associated optimal processes $X_t^{*}$ and $\pi_t^{*}$, $t\geq0$, can be written explicitly in terms of market-related input and the function $h$, namely, for $t\geq0$,
\begin{equation}\label{eq: fpp h opt policies}
 X_t^*=h\left(h^{(-1)}(x_0,0)+A_t+M_t,A_t\right)
\end{equation}
and
\begin{equation}\label{eq: fpp h opt policies portfolio}
 \pi_t^*=h_x\left(h^{(-1)}(X_t^*,A_t),A_t\right)\sigma^{(-1)}(t)\lambda(t),
\end{equation}
where $A_t$ and $M_t$, $t\geq0,$ are as in \eqref{eq: mkt inp proc} and the function $h^{(-1)}$ stands for the spatial inverse of $h$.

As mentioned above, problem \eqref{eq: h pde} (and, in turn, \eqref{eq: u pde}) are ill-posed. Nevertheless, as we review next, solutions do exist, though we expect the set of admissible initial data $u(x,0)$ and $h(x,0)$ to be rather restricted. We elaborate on this in Remark \ref{rem: initial conditions u and h}.

From the above, one observes that all objects of interest, including the risk preferences of the investor, her optimal strategies, and the associated forward investment performance process, are determined once the functions $u$ and $h$ are known and the market price of risk is chosen (which yields the processes $A_t$ and $M_t$). The study of the functions $u$ and $h$ is therefore crucial to the understanding of the (forward) portfolio choice problem. 

\begin{remark}
 Recall from Proposition \ref{prop: classical utility heat equation} that a representation of the optimal policies similar to \eqref{eq: fpp h opt policies} and \eqref{eq: fpp h opt policies portfolio} holds in the expected utility case. Note, however, that the harmonic function therein depends on market parameters while, in the monotone forward investment performance case, it does not (cf. \eqref{eq: h nu rep}).
\end{remark}

\subsubsection{Analysis of the functions $u$ and $h$}

We recall some known analytical results concerning the representation of, and connections between, the functions $u$ and $h$. Using Widder's classical theorem, it was shown in \citet{mus-zar-2010} that positive and spatially strictly increasing space-time harmonic functions $h$ can be represented in terms of a Borel measure $\nu$ that has finite Laplace transform and support in the positive reals. Given such a representation, the function $u$ is constructed using \eqref{eq: u and h relation}. Since the risk preferences and optimal strategies of the investor are represented in terms of the functions $u$ and $h$ (cf. \eqref{eq: u fpp}, \eqref{eq: fpp h opt policies}, and \eqref{eq: fpp h opt policies portfolio}), the measure $\nu$ emerges as the defining element in the entire analysis of monotone forward investment performance processes. We specify $\nu$ in detail next.

Define $\m{B}(\bbR^+)$ to be the set of finite Borel measures $\nu$ on $\bbR$ such that $\nu((-\infty,0])=0$, and consider the following subset of $\m{B}(\bbR^+)$:
\begin{eqnarray}
  \m{B}^+(\bbR^+) &=& \left\{ \nu\in\m{B}(\bbR^+)\colon\int_{0+}^{\infty}\frac{\nu(dy)}{y}<\infty \text{ and }\int_{0}^\infty e^{yx}\nu(dy)<\infty,\;  x\in\bbR \right\}.\label{eq: borel meas}
\end{eqnarray}

The following result can be found in \citet{mus-zar-2010}.

\begin{proposition}\label{prop: h nu rep}
 $i)$ Let $\nu\in\m{B}^+(\bbR^+)$. Then, the function $h\colon\bbR\times[0,\infty)\to(0,\infty)$ defined by
\begin{equation}\label{eq: h nu rep}
 h(x,t) = \int_{0+}^\infty \frac{e^{yx-\frac{1}{2}y^2t}}{y}\nu(dy)
\end{equation}
is a solution to \eqref{eq: h pde} that is positive and spatially strictly increasing.

$ii)$ Conversely, let $h\colon\bbR\times[0,\infty)\to(0,\infty)$ be a positive and spatially strictly increasing solution to \eqref{eq: h pde}. Then, there exists $\nu\in\m{B}^+(\bbR^+)$ such that $h$ is given by \eqref{eq: h nu rep}.
\end{proposition}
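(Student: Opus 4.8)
The plan is to prove the two directions separately, using the family of exponential solutions $\phi_y(x,t):=e^{yx-\frac{1}{2}y^2t}$, $y>0$, as building blocks; these are the natural extreme rays of the cone of positive space-time harmonic functions for $\partial_t+\frac{1}{2}\partial_{xx}$. For part $i)$, I would first record the elementary fact that each $\phi_y$ is a positive space-time harmonic function: a direct computation gives $\partial_t\phi_y=-\frac{1}{2}y^2\phi_y$ and $\frac{1}{2}\partial_{xx}\phi_y=\frac{1}{2}y^2\phi_y$, so $\partial_t\phi_y+\frac{1}{2}\partial_{xx}\phi_y=0$, and dividing by $y$ changes nothing. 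Positivity of $h$ in \eqref{eq: h nu rep} and its spatial strict monotonicity are then immediate from the positivity of the (nontrivial) measure $\nu$ on $(0,\infty)$ and of the integrands, since differentiating once in $x$ gives $h_x(x,t)=\int_{0+}^\infty\phi_y(x,t)\,\nu(dy)>0$. The substance is to justify differentiating \eqref{eq: h nu rep} under the integral sign, and this is exactly where the two defining conditions of $\m{B}^+(\bbR^+)$ in \eqref{eq: borel meas} are used: near $y=0$ the singular factor $1/y$ is controlled by $\int_{0+}\nu(dy)/y<\infty$, and each $x$- or $t$-derivative replaces $1/y$ by a nonnegative power of $y$, which only improves integrability at the origin, while near $y=\infty$, for $t$ bounded away from $0$ the Gaussian factor $e^{-\frac{1}{2}y^2t}$ dominates every polynomial weight uniformly on compact sets, and at $t=0$ the tail is handled by $\int_0^\infty e^{yx}\nu(dy)<\infty$. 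Dominated convergence then legitimizes the interchange, yielding $h_t+\frac{1}{2}h_{xx}=\int_{0+}^\infty\frac{1}{y}(\partial_t+\frac{1}{2}\partial_{xx})\phi_y\,\nu(dy)=0$, and letting $t\downarrow0$ recovers the (finite) initial value $h(x,0)=\int_{0+}^\infty e^{yx}y^{-1}\,\nu(dy)$.

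For the converse part $ii)$, I would invoke Widder's representation theorem for nonnegative solutions of the heat equation: since $h>0$ solves $h_t+\frac{1}{2}h_{xx}=0$, there is a nonnegative Borel measure $\mu$ on $\bbR$ with $h(x,t)=\int_{\bbR}e^{yx-\frac{1}{2}y^2t}\,\mu(dy)$. The task is then to show that the hypotheses force $\mu$ to be supported on $(0,\infty)$ and that $\nu(dy):=y\,\mu(dy)$ lands in $\m{B}^+(\bbR^+)$. Strict spatial monotonicity, $h_x(x,t)=\int_{\bbR}y\,e^{yx-\frac{1}{2}y^2t}\,\mu(dy)>0$, rules out mass on $(-\infty,0)$: were there any, the most negative point of $\mathrm{supp}(\mu)$ would dominate the integral as $x\to-\infty$ and force $h_x<0$ there, a contradiction. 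An atom at $y=0$ would contribute a positive constant and hence a nonzero limit of $h$ as $x\to-\infty$; this is excluded by the boundary behaviour of the initial datum in \eqref{eq: h pde}, since $(u_0')^{(-1)}(e^{-x})\to0$ as $x\to-\infty$ by the Inada behaviour of $u_0$ (needed for $h(\cdot,0)$ to be defined on all of $\bbR$). Thus $\mathrm{supp}(\mu)\subseteq(0,\infty)$, and setting $\nu(dy)=y\,\mu(dy)$ recovers \eqref{eq: h nu rep}. Finally I would verify membership in $\m{B}^+(\bbR^+)$ from the finiteness of $h$ and $h_x$ at a single boundary point: $\int_{0+}\nu(dy)/y=\mu((0,\infty))=h(0,0)<\infty$ gives finiteness of $\nu$ near the origin, $\nu((0,\infty))=\int_0^\infty y\,\mu(dy)=h_x(0,0)<\infty$ gives finiteness of $\nu$, and $\int_0^\infty e^{yx}\nu(dy)=\int_0^\infty y\,e^{yx}\mu(dy)=h_x(x,0)<\infty$ for every $x$ is precisely the finite-Laplace-transform condition.

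I expect part $i)$ to be bookkeeping once the dominating functions are written down. The real work, and the main obstacle, is in part $ii)$: pinning down the precise form of Widder's theorem applicable to this ill-posed backward equation on $\bbR\times[0,\infty)$, and rigorously converting the qualitative asymptotics of $h$ and $h_x$ as $x\to\pm\infty$ into the support statement $\mathrm{supp}(\mu)\subseteq(0,\infty)$. In particular, the asymptotic-dominance argument that excludes negative $y$, and the use of the boundary behaviour of the initial datum to exclude the atom at the origin, are the steps I expect to require the most care.
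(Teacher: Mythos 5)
Your proof is correct and takes essentially the same route as the paper, which does not prove this result itself but cites \citet{mus-zar-2010} and notes in a remark that the argument rests on exactly the classical Widder representation theorem you invoke. Your two refinements of the Widder measure---ruling out mass on $(-\infty,0)$ via strict spatial monotonicity, and ruling out an atom at $y=0$ via the Inada-type decay of the initial condition in \eqref{eq: h pde} (without which, e.g., $h(x,t)=1+e^{x-t/2}$ would be a counterexample)---are precisely the steps needed to land in $\m{B}^+(\bbR^+)$, so there is no gap.
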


\begin{remark}
 The proof of Proposition \ref{prop: h nu rep} is based on the classical result of Widder that characterizes nonnegative and spatially strictly increasing solutions to the backward heat equation on the half line $t\in[0,\infty)$ in terms of a Borel measure $\nu$ with finite Laplace transform. An analogous representation result can be obtained in the classical maximal expected utility case for solutions to the related terminal value problem \eqref{eq: classical utility h pde}. Indeed, one can show (see \citet{widder-1975} and \citet{wil-1980}) that $h$ solves \eqref{eq: classical utility h pde} if and only if there exists a Borel measure $\widetilde{\nu}$ on $\bbR$ such that
\begin{equation*}
 \int_{-\infty}^{\infty} e^{-\frac{y^2}{2t}}\;\widetilde{\nu}(dy)<\infty,\qquad t\in(0,T)
\end{equation*}
and
\begin{equation*}
 h(x,t) = \frac{1}{\sqrt{2\pi (A_T-A_t)}}\int_{-\infty}^{\infty} e^{-\frac{1}{2}\frac{(x-y)^2}{(A_T-A_t)}}\;\widetilde{\nu}(dy),\qquad (x,t)\in\bbR\times(0,T).
\end{equation*}
In the expected utility case, we deduce via \eqref{eq: classical utility h pde} that the measure $\widetilde{\nu}$ is absolutely continuous with respect to Lebesgue measure and is given by
\begin{equation*}
 \widetilde{\nu}(dy) = I_T(e^{-y})dy,
\end{equation*}
where $I_T$ is the inverse of the investor's marginal utility $U_T'$. Thus we see from Proposition \ref{prop: classical utility heat equation} that all objects of interest in the classical expected utility model are also specified once the market price of risk and a Borel measure encapsulating the investor's preferences are chosen. A parallel result holds in the monotone forward investment performance case, as we will see below in Theorem \ref{thm: fpp all objects of interest} and Remark \ref{rem: forward nu is preferences}.

\end{remark}

The next result characterizes analytically the set of measures $\m{B}^+(\bbR^+)$ and provides a method by which one can find the measure $\nu$ given the function $h$. It will play a central role in the proof of Theorem \ref{prop: nu gen F}.

\begin{proposition}\label{prop:inv}
i) A Borel measure $\nu$ is in $\m{B}^+(\bbR^+)$ if and only if its Laplace transform is entire and $\int_{0+}^\infty\frac{\nu(dy)}{y}<\infty$. 

ii) Let $h$ be given by \eqref{eq: h nu rep} for some $\nu\in\m{B}^+(\bbR^+)$. The mapping $x\mapsto h_x(x,0)$ is the Laplace transform of $\nu$ and it has a unique analytic extension to $\bbC$. Moreover, the mapping $$x\mapsto h_x(ix,0)$$ is the Fourier transform of $\nu$.  

\end{proposition}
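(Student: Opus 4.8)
The quantity called ``the Laplace transform'' here is the function $\phi(x):=\int_{0+}^\infty e^{xy}\,\nu(dy)$, so the exponential-integrability requirement in the definition of $\m{B}^+(\bbR^+)$ is precisely the statement that $\phi$ is finite (well-defined) at every real $x$; since the condition $\int_{0+}^\infty \nu(dy)/y<\infty$ appears on both sides of the asserted equivalence, the real content of part i) is that $\phi(x)<\infty$ for all $x\in\bbR$ if and only if $\phi$ extends to an entire function. For the forward implication I would assume $\nu\in\m{B}^+(\bbR^+)$ and exhibit the extension $\Phi(z):=\int_{0+}^\infty e^{zy}\,\nu(dy)$. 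Since $|e^{zy}|=e^{(\Re z)y}$, absolute convergence on each strip $\{|\Re z|\le R\}$ follows from the dominating bound $|e^{zy}|\le e^{Ry}+e^{-Ry}$, whose right-hand side is $\nu$-integrable by the defining property of $\m{B}^+(\bbR^+)$ (finiteness near $y=0$, the exponential condition for large $y$). Analyticity of $\Phi$ then follows from Morera's theorem after a Fubini interchange of the contour integral with $\nu$, or equivalently by justifying differentiation under the integral sign with the same dominating function.

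The converse is the substantive direction and the one I expect to be the main obstacle. Because $\nu$ is finite and supported on $(0,\infty)$, the integral $\int_{0+}^\infty e^{zy}\,\nu(dy)$ converges absolutely on the half-plane $\{\Re z\le 0\}$, and more generally on $\{\Re z<x^*\}$, where $x^*:=\sup\{x\in\bbR:\phi(x)<\infty\}\in(0,+\infty]$, defining an analytic function there; the goal is to show $x^*=+\infty$. The key input is Landau's theorem, the Laplace-transform analogue of Pringsheim's theorem (see \citet{widder-1975}): for a \emph{nonnegative} measure the real boundary point $z=x^*$ of the half-plane of convergence is necessarily a singularity of the transform. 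Thus if $\phi$ admitted an entire extension $G$, the identity theorem would force $G$ to agree with $\int_{0+}^\infty e^{zy}\,\nu(dy)$ on $\{\Re z<x^*\}$ and so continue it analytically across $z=x^*$, contradicting Landau's theorem unless $x^*=+\infty$. Hence $\phi(x)<\infty$ for all $x\in\bbR$, i.e.\ $\nu\in\m{B}^+(\bbR^+)$. It is essential here that $\nu$ is a positive measure; without nonnegativity the edge of convergence need not be a singularity and the implication would fail.

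For part ii) I would first differentiate \eqref{eq: h nu rep} under the integral sign to get, for $\nu\in\m{B}^+(\bbR^+)$, the identity $h_x(x,t)=\int_{0+}^\infty e^{yx-\frac12 y^2 t}\,\nu(dy)$, the interchange being justified by the dominating bound above (note the factor $1/y$ cancels upon differentiation, so only the $e^{xy}$-integrability is required). Setting $t=0$ yields $h_x(x,0)=\int_{0+}^\infty e^{xy}\,\nu(dy)=\phi(x)$, the Laplace transform of $\nu$ in the paper's convention. By part i) this transform is entire, so $x\mapsto h_x(x,0)$ admits the analytic extension $\Phi(z)=\int_{0+}^\infty e^{zy}\,\nu(dy)$, and this extension is unique because two entire functions agreeing on $\bbR$ coincide by the identity theorem. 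Evaluating $\Phi$ along the imaginary axis then gives $h_x(ix,0)=\Phi(ix)=\int_{0+}^\infty e^{ixy}\,\nu(dy)$, which is exactly the Fourier transform of $\nu$.

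In summary, the forward direction of i), the differentiation under the integral, and the Fourier identification in ii) are all routine consequences of the dominated-convergence estimates supplied by membership in $\m{B}^+(\bbR^+)$ together with the identity theorem. The only genuinely nontrivial step is the converse in i), where the recovery of convergence of $\int_{0+}^\infty e^{xy}\,\nu(dy)$ for every real $x$ from the mere existence of an entire extension rests squarely on Landau's singularity theorem for transforms of nonnegative measures.
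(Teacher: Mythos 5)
Your argument is correct, and much of it coincides with the paper's route: the forward direction of i) via dominated convergence and Morera/differentiation under the integral sign (the paper outsources this analyticity step to Lemma 1 in the appendix of \citet{dyb-rog-1997}), the differentiation of \eqref{eq: h nu rep} to identify $h_x(x,0)$ with the Laplace transform, and the restriction to the imaginary axis for the Fourier transform are all the same. Where you genuinely diverge is the converse direction of i). The paper reads ``the Laplace transform is entire'' as ``the integral $\int_{0+}^{\infty}e^{zy}\,\nu(dy)$ converges for every $z\in\bbC$ and is analytic,'' under which the implication is immediate: convergence at every complex $z$ gives convergence at every real $x$, hence membership in $\m{B}^+(\bbR^+)$ (finiteness of $\nu$ follows from convergence at $z=0$, and support in $\bbR^+$ is the standing restriction). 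You instead read ``entire'' as ``admits an entire extension from the half-plane of convergence,'' a strictly weaker hypothesis, and close the gap with Landau's singularity theorem for transforms of nonnegative measures. This proves a stronger statement and is a legitimately different argument; the price is reliance on a nontrivial classical theorem, which incidentally lives in Widder's \emph{The Laplace Transform} rather than in \citet{widder-1975}, his book on the heat equation. Two cosmetic points: your abscissa $x^*$ lies in $[0,+\infty]$ rather than $(0,+\infty]$ (it can equal $0$ for a finite measure), though the Landau argument is unaffected; and for the purposes of this paper the weaker reading suffices, since Proposition \ref{prop:inv} is only ever applied in the direction $\nu\in\m{B}^+(\bbR^+)\Rightarrow$ entire, whereas your version is what one would need to certify membership in $\m{B}^+(\bbR^+)$ from an analytic continuation obtained by other means.
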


\begin{proof}
 $i)$ If the Laplace transform of $\nu$ is entire, then it is finite for all reals and is therefore in $\m{B}^+(\bbR^+)$. Conversely, if $\nu\in\m{B}^+(\bbR^+)$ then its Laplace transform is finite everywhere and $\nu$ has moments of all orders. The rest of part (i) follows (see, for example, \citet[Lemma 1 in the Appendix]{dyb-rog-1997}). 

$ii)$  Using \eqref{eq: h nu rep}, we differentiate under the integral sign (justified using the dominated convergence theorem) to obtain
\[
 h_x(x,t) = \int_{-\infty}^{\infty} e^{yx-\frac{1}{2}y^2t}\nu(dy).
\]
Thus $x\mapsto h_x(x,0)$ is the Laplace transform of the measure $\nu$. As $\nu\in\m{B}^+(\bbR^+)$, we have by the first part of the Proposition that the Laplace transform is entire. In particular, its extension along the imaginary axis, $x\mapsto h_x(ix,0)$, is the Fourier transform of $\nu$.
\end{proof}

We now recall in detail the one-to-one correspondence between positive and spatially strictly increasing solutions to \eqref{eq: h pde} and spatially strictly increasing and strictly concave solutions to \eqref{eq: u pde}. The following result can be found in \citet{mus-zar-2010}.

\begin{proposition}\label{prop: h u corr}
 $i)$ Let $h$ be a positive and spatially strictly increasing solution to \eqref{eq: h pde} and let $\nu$ be the associated Borel measure (cf. \eqref{eq: h nu rep}). If $\nu$ also satisfies $\nu((0,1])=0$ and $\int_{1+}^{\infty}\frac{\nu(dy)}{y-1}<\infty$, then $u\colon(0,\infty)\times[0,\infty)\to\bbR$ is given by
\begin{equation}\label{eq: u h rep pos 1}
 u(x,t) = -\frac{1}{2}\int_0^t e^{-h^{(-1)}(x,s)+\frac{s}{2}}h_x\left(h^{(-1)}(x,s),s\right)ds + \int_{0}^x e^{-h^{(-1)}(z,0)}dz
\end{equation}
and satisfies
\begin{equation}\label{eq: u zero 1}
 \lim_{x\to 0}u(x,t)=0,\qquad\text{for }t\geq0.
\end{equation}
On the other hand, if $\nu((0,1])>0$ and/or $\int_{1+}^{\infty}\frac{\nu(dy)}{y-1}=\infty$, then
\begin{equation}\label{eq: u h rep pos 2}
 u(x,t) = -\frac{1}{2}\int_0^t e^{-h^{(-1)}(x,s)+\frac{s}{2}}h_x\left(h^{(-1)}(x,s),s\right)ds + \int_{\hat{x}}^x e^{-h^{(-1)}(z,0)}dz,
\end{equation}
for $\hat{x}>0$ with
\begin{equation}\label{eq: u zero 2}
 \lim_{x\to0} u(x,t)=-\infty,\qquad\text{for }t\geq0.
\end{equation}

For each $t\geq0$, the Inada conditions
\begin{equation}\label{eq: u h inada}
 \lim_{x\to 0}u_x(x,t)=\infty\qquad\text{and}\qquad \lim_{x\to\infty}u_x(x,t)=0
\end{equation}
are satisfied for both \eqref{eq: u h rep pos 1} and \eqref{eq: u h rep pos 2}, respectively.

$ii)$ Conversely, let $u\colon(0,\infty)\times[0,\infty)\to\bbR$ be spatially strictly increasing and strictly concave and satisfy \eqref{eq: u pde} as well as the Inada conditions \eqref{eq: u h inada}. If $u$ satisfies \eqref{eq: u zero 1}, then there exists $\nu\in\m{B}^+(\bbR^+)$ satisfying $\nu((0,1])=0$ and $\int_{1+}^{\infty}\frac{\nu(dy)}{y-1}<\infty$ such that $u$ admits representation \eqref{eq: u h rep pos 1} with $h$ given by \eqref{eq: h nu rep}. On the other hand, if $u$ satisfies \eqref{eq: u zero 2}, then there exists $\nu\in\m{B}^+(\bbR^+)$ and either (i) $\nu((0,1])>0$, or (ii) $\nu((0,1])=0$ and $\int_{1+}^{\infty}\frac{\nu(dy)}{y-1}=\infty$, such that $u$ admits representation \eqref{eq: u h rep pos 2} with $h$ given by \eqref{eq: h nu rep}.
\end{proposition}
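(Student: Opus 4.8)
The plan is to treat the transformation \eqref{eq: u and h relation}, namely $u_x(h(x,t),t)=e^{-x+t/2}$, as the bridge between the two equations and to show that it converts the linear backward equation \eqref{eq: h pde} into the nonlinear equation \eqref{eq: u pde} and conversely. For part $i)$, I would begin with $\nu\in\m{B}^+(\bbR^+)$ satisfying the stated conditions, let $h$ be given by \eqref{eq: h nu rep} (so that $h$ is positive, spatially strictly increasing, and solves \eqref{eq: h pde} by Proposition \ref{prop: h nu rep}), and \emph{define} $u$ by the candidate formula \eqref{eq: u h rep pos 1}. The central point is to show that this $u$ satisfies \eqref{eq: u and h relation}, i.e. $u_x(x,t)=e^{-h^{(-1)}(x,t)+t/2}$. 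Writing $\phi(x,s):=h^{(-1)}(x,s)$ and differentiating the formula in $x$, I would compute that the $x$-derivative of the integrand of the time-integral equals $\partial_s\bigl(e^{-\phi(x,s)+s/2}\bigr)$; here the identity $\phi_s = h_{xx}(\phi,s)/\bigl(2h_x(\phi,s)\bigr)$, which is exactly where the backward heat equation $h_t=-\tfrac12 h_{xx}$ enters, is the key. The time integral then telescopes, and after cancelling against the $x$-derivative $e^{-\phi(x,0)}$ of the spatial term one is left precisely with $u_x(x,t)=e^{-h^{(-1)}(x,t)+t/2}$.

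Once the transformation is established, the remaining claims of part $i)$ follow by direct differentiation of $u_x$. Differentiating in $x$ gives $u_{xx}(x,t)=-e^{-h^{(-1)}(x,t)+t/2}/h_x(h^{(-1)}(x,t),t)<0$, so $u$ is spatially strictly increasing and strictly concave; substituting $u_x$ and $u_{xx}$ shows that $\tfrac12 u_x^2/u_{xx}$ equals the integrand of \eqref{eq: u h rep pos 1}, which by construction is $u_t$, so \eqref{eq: u pde} holds. The Inada conditions \eqref{eq: u h inada} are read off from $u_x=e^{-h^{(-1)}+t/2}$ together with the fact that $h(\cdot,t)$ maps $\bbR$ onto $(0,\infty)$ (surjectivity follows from dominated convergence in \eqref{eq: h nu rep} using $\int_{0+}^\infty \nu(dy)/y<\infty$), so that $h^{(-1)}(x,t)\to-\infty$ as $x\to0$ and $h^{(-1)}(x,t)\to+\infty$ as $x\to\infty$.

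The delicate step, and the one I expect to be the main obstacle, is the dichotomy governing $\lim_{x\to0}u(x,t)$ and its translation into the stated conditions on $\nu$. Since the time-integral term is well behaved near $x=0$, the limit is finite exactly when $\int_{0+}e^{-h^{(-1)}(z,0)}\,dz<\infty$. Changing variables $z=h(x,0)$, $dz=h_x(x,0)\,dx$, this becomes $\int_{-\infty}^{0}e^{-x}h_x(x,0)\,dx$; using $h_x(x,0)=\int_{0+}^\infty e^{yx}\nu(dy)$ (cf. Proposition \ref{prop:inv}) and Tonelli's theorem, the integral equals $\int_{0+}^\infty\bigl(\int_{-\infty}^{0}e^{(y-1)x}\,dx\bigr)\nu(dy)$. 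The inner integral is finite if and only if $y>1$, in which case it equals $1/(y-1)$, so the whole quantity is finite precisely when $\nu((0,1])=0$ and $\int_{1+}^\infty\nu(dy)/(y-1)<\infty$. This yields \eqref{eq: u zero 1}; otherwise the spatial integral must be based at some $\hat{x}>0$ and diverges to $-\infty$ as $x\to0$, giving \eqref{eq: u zero 2}. Care is needed to justify the interchange of integration and to isolate the near-zero behaviour from the (always finite) contribution away from the origin.

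For the converse, part $ii)$, I would define $h$ by inverting the transformation, $h^{(-1)}(x,t)=-\log u_x(x,t)+t/2$; strict concavity and the Inada conditions \eqref{eq: u h inada} guarantee that $u_x(\cdot,t)\colon(0,\infty)\to(0,\infty)$ is a strictly decreasing bijection, so $h$ is well defined, positive, and spatially strictly increasing. To check that $h$ solves \eqref{eq: h pde}, I would differentiate $u_x(h,t)=e^{-x+t/2}$ once and twice in $x$ and once in $t$, and combine these with the $x$-differentiated form of \eqref{eq: u pde}, namely $u_{xt}=u_x-\tfrac12 u_x^2 u_{xxx}/u_{xx}^2$; the resulting expression for $u_{xx}(h,t)\bigl(h_t+\tfrac12 h_{xx}\bigr)$ collapses to $0$, so $h$ is a positive, spatially strictly increasing solution of the backward heat equation. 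Proposition \ref{prop: h nu rep}$ii)$ then furnishes $\nu\in\m{B}^+(\bbR^+)$ with $h$ given by \eqref{eq: h nu rep}. Finally, running the change-of-variables computation of the previous paragraph in reverse matches \eqref{eq: u zero 1} to $\nu((0,1])=0$ together with $\int_{1+}^\infty\nu(dy)/(y-1)<\infty$, and \eqref{eq: u zero 2} to the complementary case, completing the correspondence.
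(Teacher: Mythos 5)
A preliminary remark: the paper does not prove Proposition \ref{prop: h u corr} at all --- it is recalled verbatim from Musiela and Zariphopoulou (2010) --- so there is no in-paper proof to compare against; your proposal must be judged as a reconstruction of the cited source's (standard) argument, which it essentially follows. Your part $i)$ is sound: the telescoping computation (using $\partial_s h^{(-1)} = h_{xx}/(2h_x)$, which is where \eqref{eq: h pde} enters) does yield $u_x(x,t)=e^{-h^{(-1)}(x,t)+t/2}$, from which concavity, \eqref{eq: u pde} and the Inada conditions follow as you describe, and the Tonelli computation correctly characterizes integrability of $e^{-h^{(-1)}(z,0)}$ near $z=0$ by $\nu((0,1])=0$ together with $\int_{1+}^{\infty}\frac{\nu(dy)}{y-1}<\infty$ (the extra factor $e^{(y-1)x_0}$ produced by the change of variables is harmless because the Laplace transform of $\nu$ is entire, Proposition \ref{prop:inv}). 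One caveat: your claim that the time-integral term is ``well behaved'' near $x=0$ is false when $\nu((0,1))>0$ --- that term then diverges to $-\infty$ as well; the dichotomy survives only because the term is nonpositive, so it cannot offset the divergent spatial integral, while in the case $\nu((0,1])=0$ it tends to $0$ by dominated convergence. This case distinction should be made explicit.

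The genuine gap is the final step of part $ii)$. Constructing $h$ from $u$, verifying that it solves \eqref{eq: h pde} (your computation is correct), and invoking Proposition \ref{prop: h nu rep} produces $\nu$; matching \eqref{eq: u zero 1}/\eqref{eq: u zero 2} to the conditions on $\nu$ is then the same Tonelli computation. But part $ii)$ asserts that $u$ \emph{admits the representation} \eqref{eq: u h rep pos 1} or \eqref{eq: u h rep pos 2}, whereas your argument only shows that $u$ and the right-hand side have identical $x$- and $t$-derivatives, i.e.\ agree up to an additive constant. In case \eqref{eq: u zero 1} you can close this in one line: both sides tend to $0$ as $x\to0$, hence coincide --- but this line is missing. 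In case \eqref{eq: u zero 2} the constant cannot in general be absorbed into the free parameter $\hat{x}$: take $u(x,t)=-e^{t/4}/x$, which satisfies all hypotheses and \eqref{eq: u zero 2}; here $u_x$ forces $h(x,t)=e^{x/2-t/8}$, hence $\nu=\tfrac{1}{2}\delta_{1/2}$ (unique by Widder), and the right-hand side of \eqref{eq: u h rep pos 2} evaluates to $-e^{t/4}/x+1/\hat{x}$, which differs from $u$ by $1/\hat{x}>0$ for every $\hat{x}>0$. So in case 2 the representation can hold only up to an additive constant (equivalently, after adding $u(\hat{x},0)$ to \eqref{eq: u h rep pos 2}); this is an imprecision in the statement as transcribed, and a complete proof must either establish this corrected version or identify the restriction under which \eqref{eq: u h rep pos 2} holds literally, namely that $u(\cdot,0)$ vanishes at some $\hat{x}>0$.
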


\begin{remark}\label{rem: initial conditions u and h}
 It follows from Proposition \ref{prop: h u corr} that there exists a monotone forward investment process with initial datum $u_0$ if and only if the initial condition $h(x,0)$ for the space-time harmonic function $h$, associated to $u$ via \eqref{eq: u and h relation}, is given by
\begin{equation*}
 h(x,0) = \int_{0+}^\infty \frac{e^{yx}}{y}\nu(dy),
\end{equation*}
for some $\nu\in\m{B}^+(\bbR^+)$. Therefore, the set of initial conditions for $h$ and, thus of $u$, is restricted to be those functions representable as a particular integral with respect to a Borel measure with finite Laplace transform.
\end{remark}

\subsubsection{Solution to the model under monotone forward investment performance criteria}

We are now ready to recall the characterization of all objects of interest in the case of the monotone forward investment performance criterion. Note that we introduce condition \eqref{eq: nu admissible}, which is a stronger condition than is needed for the representations of $h$ (cf. \eqref{eq: borel meas}) and thus of $u$, but is sufficient to guarantee the admissibility of the candidate optimal policy \eqref{eq: optimal portfolio}. The following result can be found in \citet{mus-zar-2010}.

\begin{theorem}\label{thm: fpp all objects of interest}
 $i)$ Let $h$ be a positive and spatially strictly increasing solution to \eqref{eq: h pde}, for $(x,t)\in\bbR\times[0,\infty)$, and assume that the associated measure $\nu$ satisfies
\begin{equation}\label{eq: nu admissible}
 \int_{-\infty}^{\infty} e^{yx+\frac{1}{2}y^2t}\nu(dy)<\infty,\qquad (x,t)\in\bbR\times[0,\infty).
\end{equation}
Let $A_t$ and $M_t$, $t\geq0$, be as in \eqref{eq: mkt inp proc} and define the processes $X_t^*$ and $\pi_t^*$ by
\begin{equation}\label{eq: optimal wealth}
 X_t^*=h\left(h^{(-1)}(x_0,0)+A_t+M_t,A_t\right) 
\end{equation}
and
\begin{equation}\label{eq: optimal portfolio}
 \pi_t^*=h_x\left(h^{(-1)}(X_t^*,A_t),A_t\right)\sigma^{(-1)}(t)\lambda(t),
\end{equation}
for $t\geq0$, $x_0>0$, with $h$ as above and $h^{(-1)}$ being its spatial inverse. Then, the portfolio process $\pi_t^*$ is admissible and generates $X_t^*$, i.e.
\begin{equation*}
 X_t^* = x_0 + \int_0^t \sigma(s)\pi_s^*\cdot(\lambda(s)ds+dW_s).
\end{equation*}

$ii)$ Let $u$ be a spatially strictly increasing and strictly concave solution to \eqref{eq: u pde}, associated to $h$ via Proposition \ref{prop: h u corr}. Let $U(x,t),t\geq0,x>0$ be given by
\begin{equation}\label{eq: forward performance}
 U(x,t) = u(x,A_t).
\end{equation}
Then $U(x,t)$ is a forward investment performance process and the processes $X_t^*$ and $\pi_t^*$ defined in \eqref{eq: optimal wealth} and \eqref{eq: optimal portfolio} are optimal.
\end{theorem}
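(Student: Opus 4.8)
The plan is to treat both parts as verification arguments resting on It\^o's formula together with the two equations \eqref{eq: h pde} and \eqref{eq: u pde}. For part $i)$, I would first confirm that $\pi^*$ generates $X^*$ by applying It\^o's formula to $X_t^* = h(Y_t, A_t)$, where $Y_t := h^{(-1)}(x_0,0)+A_t+M_t$. Since $\lambda$ is deterministic, $A_t$ from \eqref{eq: mkt inp proc} is a continuous finite-variation process with $dA_t=\abs{\lambda(t)}^2dt$, while $dM_t=\lambda(t)\cdot dW_t$ and $d\langle M\rangle_t=dA_t$; hence $dY_t=dA_t+dM_t$, $d\langle Y\rangle_t=\abs{\lambda(t)}^2dt$, and the only second-order contribution comes from $Y$. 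Expanding and grouping the $dt$-terms produces the combination $h_t+\tfrac12 h_{xx}$ evaluated at $(Y_t,A_t)$, which vanishes by \eqref{eq: h pde}, leaving precisely $dX_t^*=h_x(Y_t,A_t)\,\lambda(t)\cdot(\lambda(t)dt+dW_t)$. Comparison with the wealth equation \eqref{eq: wealth eqn}, together with $h^{(-1)}(X_t^*,A_t)=Y_t$ (spatial strict monotonicity of $h$), identifies $\sigma(t)\pi_t^*=h_x(Y_t,A_t)\lambda(t)$, i.e. the portfolio \eqref{eq: optimal portfolio}. Positivity $X_t^*>0$, and hence the state constraint, is immediate from $h>0$.

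The remaining admissibility requirement is $\bbE\int_0^t\abs{\sigma(s)\pi_s^*}^2ds<\infty$. Here I would use the representation $h_x(x,s)=\int_{0+}^\infty e^{yx-\frac12 y^2 s}\nu(dy)$ from Proposition \ref{prop:inv}. Since $\abs{\sigma(s)\pi_s^*}^2=h_x(Y_s,A_s)^2\abs{\lambda(s)}^2$ and $\abs{\lambda}$ is bounded (Assumption \ref{assump: market price of risk}), it suffices to bound $\bbE[h_x(Y_s,A_s)^2]$. Writing this second moment as a double integral against $\nu\otimes\nu$, applying Tonelli, and evaluating the Gaussian moment generating function of $M_s$ (centered with variance $A_s$), the cross term simplifies and, after the elementary bound $yz\leq\tfrac12(y^2+z^2)$, reduces to the square of an integral of the form $\int_{0+}^\infty e^{y(y_0+A_s)+\frac12 y^2 A_s}\nu(dy)$, with $y_0:=h^{(-1)}(x_0,0)$. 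This is finite exactly by the hypothesis \eqref{eq: nu admissible} — the point at which the condition strictly stronger than \eqref{eq: borel meas} is needed — and a uniform-in-$s$ bound on $[0,t]$ (using monotonicity of $A_s$) then closes part $i)$.

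For part $ii)$, conditions (i) and (ii) of Definition \ref{defn: fpp} are immediate: $A_0=0$ gives $U(x,0)=u(x,0)=u_0(x)$, and for fixed $(t,\omega)$ the map $x\mapsto u(x,A_t)$ inherits strict concavity and monotonicity from $u$. The substance is (iii)--(iv). I would apply It\^o's formula to $u(X_t^\pi,A_t)$ for an arbitrary admissible $\pi$, writing $\phi_t:=\sigma(t)\pi_t$; the $dt$-terms assemble into $u_x\,\phi_t\cdot\lambda+u_t\abs{\lambda}^2+\tfrac12 u_{xx}\abs{\phi_t}^2$. Substituting \eqref{eq: u pde} for $u_t$ turns the drift into a quadratic in $\phi_t$ which, since $u_{xx}<0$, is concave and, by completing the square, has maximum value exactly $0$, attained uniquely at $\phi_t^*=-(u_x/u_{xx})\lambda(t)$. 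Thus $u(X_t^\pi,A_t)$ is a local supermartingale for every admissible $\pi$ and a local martingale for the maximizing control. Differentiating the defining relation \eqref{eq: u and h relation} once in $x$ gives $u_{xx}(h,\cdot)h_x=-e^{-x+t/2}$, whence $-(u_x/u_{xx})(X_t^*,A_t)=h_x(h^{(-1)}(X_t^*,A_t),A_t)$, so the maximizing control coincides with the portfolio \eqref{eq: optimal portfolio} from part $i)$; this is precisely statement (iv).

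The main obstacle is the passage from local to genuine (super)martingales, since the completion-of-squares step delivers only local statements. For (iii) one must justify the supermartingale inequality for every admissible $\pi$ even when $u$ is unbounded below (the case \eqref{eq: u zero 2}), and for (iv) one must show the candidate stochastic integral $\int_0^\cdot u_x(X_s^*,A_s)\,\phi_s^*\cdot dW_s$ is a true martingale so that the conditional-expectation equality, not merely an inequality, holds. I would control these using the explicit forms $u_x(X_s^*,A_s)=e^{-Y_s+A_s/2}$ and $\phi_s^*=h_x(Y_s,A_s)\lambda(s)$ together with the moment bounds furnished by \eqref{eq: nu admissible}, exactly as in part $i)$, combined with a localization-and-uniform-integrability argument (and, on the supermartingale side, a concavity-based lower bound permitting an application of Fatou's lemma). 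This integrability bookkeeping, rather than any single conceptual step, is where the real work lies.
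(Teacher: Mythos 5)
The paper does not actually prove this theorem: it is recalled verbatim from \citet{mus-zar-2010}, so there is no in-paper proof to compare against. Your proposal is, however, exactly the verification argument used in that reference: It\^o's formula applied to $h\left(h^{(-1)}(x_0,0)+A_t+M_t,A_t\right)$ with the drift killed by \eqref{eq: h pde}; admissibility of $\pi^*$ via the Widder representation of $h_x$, Tonelli, the Gaussian moment generating function of $M_s$, and the bound $yz\leq\tfrac12(y^2+z^2)$, which is precisely where \eqref{eq: nu admissible} (rather than the weaker \eqref{eq: borel meas}) is consumed; and, for part $ii)$, completing the square in the drift of $u(X_t^\pi,A_t)$ using \eqref{eq: u pde} and identifying the maximizer with \eqref{eq: optimal portfolio} by differentiating \eqref{eq: u and h relation}. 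All of these computations check out. The one place where your plan is not quite right as stated is the passage from local to genuine supermartingale in property (iii) of Definition \ref{defn: fpp}: you propose a ``concavity-based lower bound permitting an application of Fatou's lemma,'' but concavity of $x\mapsto u(x,t)$ yields linear \emph{upper} bounds (which is what controls $\bbE[U(X_t^\pi,t)^+]$), not the lower bound that Fatou requires along a localizing sequence. In the case \eqref{eq: u zero 1} the lower bound is simply $u\geq 0$ and the argument closes; in the case \eqref{eq: u zero 2}, where $u(0+,t)=-\infty$, no such lower bound exists and one needs a different device (this is handled in \citet{mus-zar-2010} by a separate argument), so that step of your sketch should not be presented as routine. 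Apart from this mislabeled ingredient, the proposal is sound and follows the same route as the cited source.
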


\begin{remark}\label{rem: forward nu is preferences}
 The measure $\nu$ encapsulates the investor's risk preferences under monotone forward investment performance criteria. To see this, recall that in the expected utility framework, the investor's initial wealth, market input, and her terminal utility function comprise the set of inputs that are sufficient to solve the investment problem (see Proposition \ref{prop: classical utility heat equation}). On the other hand, under monotone forward investment criteria the sufficient set of inputs is composed of the investor's initial wealth, market input and an admissible Borel measure $\nu$ (rather than a utility function). Indeed, given an admissible measure $\nu$, one forms the function $h$ via \eqref{eq: h nu rep} and the function $u$ via Proposition \ref{prop: h u corr} ($\nu$ also determines the initial datum $u_0$; see Remark \ref{rem: initial conditions u and h}). In turn, one forms the investor's optimal policy and forward investment performance process using Theorem \ref{thm: fpp all objects of interest}.
\end{remark}

To close this section, we present the following scaling result, which shows that one can normalize the function $h$ and assume that the measure $\nu$ is a finite Borel measure of arbitrary total mass. This fact will be used in the proof of Theorem \ref{prop: nu gen F}. To this end, we denote by $h_0$ the total mass of $\nu$ and, with a slight abuse of notation, the associated wealth process by $X_t^*(x_0;h_0),t\geq 0$.

\begin{proposition}\label{prop: scaling}
 For $h_0=\nu(\bbR)$, the optimal wealth process satisfies, for $t\geq 0$,
\begin{equation*}
 \frac{k_0}{h_0}X_t^*(x_0; h_0) = X_t^*\left(\frac{k_0}{h_0}x_0; k_0\right),
\end{equation*}
where $k_0$ is an arbitrary positive constant.
\end{proposition}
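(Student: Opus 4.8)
The plan is to exploit the linearity of the integral representation \eqref{eq: h nu rep} in the measure $\nu$. Write $h$ for the space-time harmonic function associated to $\nu$, so that $\nu(\bbR)=h_0$, and introduce the rescaled measure $\tilde{\nu}:=\frac{k_0}{h_0}\nu$. Since $\tilde{\nu}$ is merely a positive scalar multiple of $\nu$, it inherits all the integrability properties of $\nu$ and hence remains in $\m{B}^+(\bbR^+)$ (and satisfies \eqref{eq: nu admissible} whenever $\nu$ does); moreover its total mass is $\tilde{\nu}(\bbR)=\frac{k_0}{h_0}\nu(\bbR)=k_0$. Let $\tilde{h}$ denote the harmonic function that \eqref{eq: h nu rep} attaches to $\tilde{\nu}$. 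Pulling the constant $\frac{k_0}{h_0}$ out of the integral in \eqref{eq: h nu rep} gives at once
\[
\tilde{h}(x,t)=\frac{k_0}{h_0}\,h(x,t),\qquad (x,t)\in\bbR\times[0,\infty).
\]

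First I would translate this output scaling into a statement about spatial inverses. Since $\tilde{h}(\cdot,t)=\frac{k_0}{h_0}h(\cdot,t)$ for each fixed $t$, solving $\tilde{h}(x,t)=y$ is the same as solving $h(x,t)=\frac{h_0}{k_0}y$, whence
\[
\tilde{h}^{(-1)}(y,t)=h^{(-1)}\!\left(\tfrac{h_0}{k_0}\,y,t\right).
\]
In particular, evaluating at the rescaled initial wealth $\tilde{x}_0:=\frac{k_0}{h_0}x_0$ and at $t=0$ yields $\tilde{h}^{(-1)}(\tilde{x}_0,0)=h^{(-1)}(x_0,0)$, so the starting point of the driving argument is left unchanged by the rescaling.

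Next I would substitute into the optimal-wealth representation \eqref{eq: optimal wealth} for the rescaled problem, which carries a measure of total mass $k_0$ and initial wealth $\tilde{x}_0$. The key point is that $A_t$ and $M_t$ depend only on the market price of risk (cf. \eqref{eq: mkt inp proc}) and are therefore identical in the two problems. Hence the inner argument $\tilde{h}^{(-1)}(\tilde{x}_0,0)+A_t+M_t$ coincides with $h^{(-1)}(x_0,0)+A_t+M_t$, and applying $\tilde{h}=\frac{k_0}{h_0}h$ gives
\[
X_t^*\!\left(\tfrac{k_0}{h_0}x_0;\,k_0\right)=\tilde{h}\!\left(h^{(-1)}(x_0,0)+A_t+M_t,\,A_t\right)=\frac{k_0}{h_0}\,X_t^*(x_0;h_0),
\]
which is the claimed identity.

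I do not anticipate a serious obstacle: the result is a clean consequence of the linearity of \eqref{eq: h nu rep} in $\nu$, together with the fact that the market-related processes $A_t,M_t$ are insensitive to the normalization of the measure. The only points requiring a little care are the bookkeeping of the spatial inverse under output scaling and the verification that $\tilde{\nu}$ remains admissible, both of which are routine. It is worth emphasizing that $h^{(-1)}$ enters the wealth formula only through the combination that exactly cancels the rescaling of the initial datum, which is precisely why the total mass $k_0$ may be chosen arbitrarily.
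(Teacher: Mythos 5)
Your argument is correct and is essentially the paper's own proof: both define the rescaled harmonic function $\hat{h}=\frac{k_0}{h_0}h$ (you additionally note it arises from the rescaled measure $\frac{k_0}{h_0}\nu$ via the linearity of \eqref{eq: h nu rep}), use the inverse relation $h^{(-1)}(x_0,0)=\hat{h}^{(-1)}\bigl(\frac{k_0}{h_0}x_0,0\bigr)$, and substitute into \eqref{eq: optimal wealth}. No gaps; your extra remarks on admissibility of the rescaled measure and the invariance of $A_t,M_t$ are correct bookkeeping that the paper leaves implicit.
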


\begin{proof}
 Let $\hat{h}(x,t) = \frac{k_0}{h_0}h(x,t)$.  Then,
\begin{eqnarray*}
 X_t^*(x_0;h_0) &=& h\left( h^{(-1)}(x_0,0)+A_t+M_t,A_t\right)=\frac{h_0}{k_0} \hat{h}\left( h^{(-1)}(x_0,0)+A_t+M_t,A_t\right)\\
&=& \frac{h_0}{k_0} \hat{h}\left( \hat{h}^{(-1)}\left( \frac{k_0}{h_0}x_0,0\right)+A_t+M_t,A_t\right)=\frac{h_0}{k_0} X_t^*\left(\frac{k_0}{h_0}x_0;k_0\right),
\end{eqnarray*}
where we have used the fact that $h^{(-1)}(x_0,0) = \hat{h}^{(-1)}\left(\frac{k_0}{h_0}x_0,0\right)$.
\end{proof}

\section{Targeted wealth distributions in a fixed investment horizon setting}\label{sect: fixed}

In this section we consider a Merton investor with the \emph{fixed} investment horizon $[0,T]$, for some arbitrary positive terminal time $T<\infty$. The investment horizon is preset at initial time, when investment begins, and does not change throughout the course of investing. First, we present the case where the investor chooses a probability distribution for her terminal wealth. Subsequently, we consider an investor who chooses a probability distribution for her wealth to be realized at some arbitrary intermediate time within her investment horizon. In both cases, we show how, for a given initial wealth $x_0>0$, the investor's targeted distribution and an estimate of the market price of risk can be used to: 
\begin{itemize}
 \item determine if the chosen distribution is attainable in this market environment;
 \item infer the investor's risk preferences; and
 \item describe how the investor should invest to attain her goal.
\end{itemize}

We start with the family of distributions that we consider herein. Throughout, the function $\Phi\colon\bbR\to(0,1)$ denotes the distribution function of the standard normal random variable.

\begin{assumption}\label{assump: distributions}
 A chosen distribution function $F\colon(0,\infty)\to(0,1)$ for future wealth is continuous, strictly increasing, and satisfies
\begin{equation}\label{eq: distribution growth condition}
 F^{(-1)}(\Phi(x))\leq Ke^{a\abs{x}},\qquad x\in\bbR,
\end{equation}
for some positive constants $K$ and $a$.
\end{assumption}

\subsection{Investment target placed at terminal time}

We start with the case in which the investor specifies a desired distribution for her terminal wealth. We address the three bullet points above. With regards to the second point, we infer the investor's risk preferences by finding her marginal utility function.

\begin{theorem}\label{thm: classical utility infer pref}
 Suppose the investor with initial wealth $x_0>0$ targets her terminal wealth $X_T^*$ to have distribution function $F$ satisfying Assumption \ref{assump: distributions}. Let $A_t$ and $M_t$, $t\in[0,T]$, be as in \eqref{eq: mkt inp proc}. Then, the following hold.

$i)$ The investor's target can be attained only if $F$ satisfies the budget constraint
\begin{equation}\label{eq: F budget constraint}
 x_0=\frac{1}{\sqrt{2\pi A_T}}\int_{-\infty}^{\infty} e^{-\frac{y^2}{2A_T}}F^{(-1)}\left( \Phi\left( \frac{y-A_T}{\sqrt{A_T}}\right)\right)dy,
\end{equation}
where $F^{(-1)}$ denotes the inverse of $F$.

$ii)$ If $F$ satisfies \eqref{eq: F budget constraint}, then the investor's marginal utility function is given by
\begin{equation}\label{eq: classical marginal utility}
 U_T'(x) = \exp\left( -\sqrt{A_T}\Phi^{(-1)}(F(x)) \right).
\end{equation}

$iii)$ The investor's optimal wealth and portfolio processes are given, respectively, by
\begin{equation}\label{eq: thm 1 wealth}
 X_t^* = h\left( h^{(-1)}(x_0,0)+A_t+M_t,t\right),\qquad t\in[0,T],
\end{equation}
and 
\begin{equation}\label{eq: thm 1 portfolio}
 \pi_t^* =  h_x\left( h^{(-1)}\left(X_t^*,t\right),t\right)\sigma^{(-1)}(t)\lambda(t),\qquad t\in[0,T],
\end{equation}
where the function $h$ is given by
\begin{equation}\label{eq: h function fixed horizon}
 h(x,t)=\frac{1}{\sqrt{2\pi(A_T-A_t)}}\int_{-\infty}^{\infty} e^{-\frac{1}{2}\frac{(x-y)^2}{(A_T-A_t)}}F^{(-1)}\left(\Phi\left( \frac{y}{\sqrt{A_T}}\right)\right)dy.
\end{equation}
\end{theorem}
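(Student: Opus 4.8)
The plan is to exploit the fact that a Merton investor in this complete market has a very rigid optimal terminal wealth: by the standard complete-market duality characterization (see \citet{kar-shr-1998}), her optimal terminal wealth must be of the form $X_T^*=I_T(kZ_T)$, where $I_T=(U_T')^{(-1)}$, $Z_T$ is the state price density from \eqref{eq: state price density process}, and $k>0$ is the Lagrange multiplier of the binding budget constraint. Since $I_T$ is strictly decreasing and $Z_T=\exp(-M_T-\tfrac12 A_T)$ is strictly decreasing in $M_T$, the wealth $X_T^*$ is a strictly increasing function of $M_T$ alone. Because $\lambda$ is deterministic, $M_T=\int_0^T\lambda(s)\cdot dW_s$ is centered Gaussian with variance $A_T$ (which is strictly positive by Assumption \ref{assump: market price of risk}), so $M_T/\sqrt{A_T}$ is standard normal. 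The probability integral transform then forces the identification: the unique strictly increasing function of $M_T/\sqrt{A_T}$ whose law is $F$ is $F^{(-1)}\circ\Phi$, whence
\[
 X_T^* = F^{(-1)}(\Phi(M_T/\sqrt{A_T})).
\]
This single identity is the engine for all three parts.

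For part $(i)$ I would use that, for the optimal (admissible) policy, the deflated wealth $Z_tX_t^*$ is a true martingale, so that $x_0=\bbE[Z_TX_T^*]$; this pricing identity is a necessary condition for attainability, which is exactly what the ``only if'' asserts. Substituting the displayed expression for $X_T^*$ together with the density of $M_T\sim N(0,A_T)$, the exponent $-m-\tfrac12 A_T-\tfrac{m^2}{2A_T}$ completes the square to $-\tfrac{(m+A_T)^2}{2A_T}$; the change of variables $y=m+A_T$ then yields exactly \eqref{eq: F budget constraint}. For part $(ii)$ I would read the marginal utility off the first-order condition $U_T'(X_T^*)=kZ_T$. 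Inverting the displayed identity gives $M_T=\sqrt{A_T}\,\Phi^{(-1)}(F(x))$ at $x=X_T^*$, and substituting into $Z_T$ expresses $U_T'(x)=k e^{-A_T/2}\exp(-\sqrt{A_T}\,\Phi^{(-1)}(F(x)))$. Since a marginal utility is determined only up to a positive multiplicative constant, normalizing $k e^{-A_T/2}=1$ gives \eqref{eq: classical marginal utility}.

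For part $(iii)$ I would first check that the $U_T'$ of \eqref{eq: classical marginal utility} meets Assumption \ref{assumption on utility}: it is positive and strictly decreasing (so $U_T$ is strictly increasing and strictly concave), the Inada conditions \eqref{eq: inada classical utility} follow by letting $F\downarrow0$ and $F\uparrow1$, and its inverse $I_T(y)=F^{(-1)}(\Phi(-\log y/\sqrt{A_T}))$ inherits the polynomial growth \eqref{eq: I poly growth} directly from the growth condition \eqref{eq: distribution growth condition} of Assumption \ref{assump: distributions}. I can then invoke Proposition \ref{prop: classical utility heat equation}: its terminal datum becomes $h(x,T)=I_T(e^{-x})=F^{(-1)}(\Phi(x/\sqrt{A_T}))$, so that \eqref{eq: thm 1 wealth}--\eqref{eq: thm 1 portfolio} are immediate. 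It then remains to verify that \eqref{eq: h function fixed horizon} is the solution of \eqref{eq: classical utility h pde} with this datum: under the time change $s=A_T-A_t$ the Gaussian kernel $\tfrac{1}{\sqrt{2\pi s}}e^{-(x-y)^2/2s}$ satisfies $\partial_sG=\tfrac12\partial_{xx}G$, so that $h_t=-\tfrac12\abs{\lambda(t)}^2h_{xx}$, and as $t\uparrow T$ the kernel concentrates to recover $h(\cdot,T)$. Finally, evaluating \eqref{eq: h function fixed horizon} at $(x,t)=(-A_T,0)$ reproduces the right-hand side of \eqref{eq: F budget constraint}, so the budget constraint is precisely the statement $h^{(-1)}(x_0,0)=-A_T$; this in turn is exactly what makes $(h^{(-1)}(x_0,0)+A_T+M_T)/\sqrt{A_T}$ standard normal, confirming that $X_T^*$ in \eqref{eq: thm 1 wealth} indeed has law $F$ and closing the loop with the first paragraph.

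The main obstacle is conceptual rather than computational: it is the identification in the first paragraph, namely recognizing that the Merton optimum is a monotone function of the single Gaussian $M_T$ and that the probability integral transform then pins $X_T^*$ down uniquely as $F^{(-1)}\circ\Phi$. Once this is in hand, the Gaussian integral of part $(i)$, the inversion of part $(ii)$, and the heat-kernel verification of part $(iii)$ are routine, as is the checking of Assumption \ref{assumption on utility}.
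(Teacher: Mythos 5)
Your proposal is correct and follows essentially the same route as the paper: both arguments pin down $X_T^*=F^{(-1)}\bigl(\Phi(M_T/\sqrt{A_T})\bigr)$ via monotonicity of the optimal terminal wealth in the Gaussian $M_T$ (the paper phrases this through the $h$-representation of Proposition \ref{prop: classical utility heat equation} plus the normalization $h^{(-1)}(x_0,0)=-A_T$, you phrase it through the first-order condition $U_T'(X_T^*)=kZ_T$ and the probability integral transform, which are the same identity), then obtain (i) from the binding budget constraint by completing the square, (ii) by inverting that identity up to a positive multiplicative constant, and (iii) from the Gaussian representation of the solution to \eqref{eq: classical utility h pde}. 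Your closing consistency check that $h(-A_T,0)=x_0$ is exactly the budget constraint is a nice touch that the paper leaves implicit in its choice of normalization.
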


\begin{proof}
If $F$ is the desired wealth distribution function, then \eqref{eq: classical utility optimal wealth} yields
\begin{eqnarray}
 F(y) &=& \bbP(X_T^*\leq y) = \bbP\left(h(h^{(-1)}(x_0,0)+M_T+A_T,T)\leq y\right)\nonumber\\
&=&\bbP\left( M_T\leq h^{(-1)}(y,T)-h^{(-1)}(x_0,0)-A_T\right)\nonumber\\
&=& \label{eq: F in terms of h}\Phi\left( \frac{h^{(-1)}(y,T)-h^{(-1)}(x_0,0)-A_T}{\sqrt{A_T}}\right),
\end{eqnarray}
where we used that $M_T$ is centered normal with variance $A_T$ (see \eqref{eq: mkt inp proc}). 

Next, we choose
\begin{equation}\label{eq: classical utility h inverse}
 h^{(-1)}(x_0,0)=-A_T,
\end{equation}
which, as we explain in detail in Remark \ref{rem: affine transformation}, can be done without loss of generality. From the above and \eqref{eq: F in terms of h}, we then find that
\begin{equation}\label{eq: classical utility terminal pf 1}
 h(x,T) = F^{(-1)}\left(\Phi\left(\frac{x}{\sqrt{A_T}}\right)\right),\qquad x\in\bbR.
\end{equation}

To show i), observe that from \eqref{eq: classical utility optimal wealth}, \eqref{eq: classical utility h inverse}, and \eqref{eq: classical utility terminal pf 1} we have
\begin{equation*}
 X_{T}^* = F^{(-1)}\left(\Phi\left(\frac{M_T}{\sqrt{A_T}}\right)\right).
\end{equation*}
On the other hand, it is well known (see, for example, \citet{kar-shr-1998}) that the budget constraint $x_0=\bbE(Z_TX_T^*)$, where $Z_T$ is as in \eqref{eq: state price density process}, is binding. Combining the above, we deduce that
\begin{eqnarray*}
 x_0&=& \frac{1}{\sqrt{2\pi A_T}}\int_{-\infty}^{\infty} e^{-\frac{y^2}{2A_T}}F^{(-1)}\left(\Phi\left(\frac{y-A_T}{\sqrt{A_T}}\right)\right)dy.
\end{eqnarray*}
Recall that $F$ satisfies the inequality \eqref{eq: distribution growth condition} and, therefore, the above integral converges.

To prove ii), we use equality \eqref{eq: classical utility terminal pf 1} and the terminal condition for $h$ from \eqref{eq: classical utility h pde} to obtain
\begin{equation}\label{eq: classical inverse marginal utility distribution}
 I_T(e^{-x}) = F^{(-1)}\left(\Phi\left( \frac{x}{\sqrt{A_T}}\right)\right),\qquad x\in\bbR.
\end{equation}
Since $I_T=(U_T')^{(-1)}$, we have that
\begin{equation}\label{eq: marginal utility}
 U_T'(x) = \exp\left( - \sqrt{A_T}\Phi^{(-1)}(F(x))\right),
\end{equation}
and \eqref{eq: classical marginal utility} follows.  

We note that the conditions $\lim_{x\downarrow 0}F(x)=0$ and $\lim_{x\uparrow\infty}F(x)=1$ on $F$ ensure that $U_T$ satisfies the Inada conditions \eqref{eq: inada classical utility}.  Moreover, the polynomial growth requirement \eqref{eq: I poly growth} on $I$ necessitates the condition 
\begin{equation}\label{eq: F inverse growth}
 F^{(-1)}(\Phi(x))\leq a +  ae^{b\abs{x}},\qquad x\in\bbR,
\end{equation}
for some positive constants $a$ and $b$, for which \eqref{eq: distribution growth condition} is sufficient.

Finally, to show iii), we recall that the function $h$ satisfies \eqref{eq: classical utility h pde}. Replacing the terminal condition with \eqref{eq: classical inverse marginal utility distribution} and using the representation formula for the solution of the Cauchy problem, we obtain \eqref{eq: h function fixed horizon}.

\end{proof}

\begin{remark}\label{rem: affine transformation}
It is well known that an expected utility maximizer's optimal wealth process is invariant under positively-sloped linear transformations of the utility function $U_T$. This fact leads to a crucial observation used in the proof of Theorem \ref{thm: classical utility infer pref}, namely that the constant $h^{(-1)}(x_0,0)$ can be chosen, without loss of generality, to be any real number. To see this, suppose that the investor has utility function $U_T$. Let $I_T=(U_T')^{(-1)}$ and solve \eqref{eq: classical utility h pde} to obtain $h$, and suppose that $h^{(-1)}(x_0,0)=c_1\in\bbR$. Now let $\widetilde{U}_T(x)=e^{c_1-c_2}U_T(x)$, for some $c_2\in \bbR, c_2\neq c_1$, be a positively-sloped linear transformation of $U_T$. Next, let $\widetilde{I}_T(y) = ( \widetilde{U}_T')^{(-1)}$ and let $\widetilde{h}$ be the solution to \eqref{eq: classical utility h pde} using $\widetilde{I}_T$ in the terminal condition. It is then easily seen that $\widetilde{I}_T(y) = I_T\left(e^{c_2-c_1}y\right)$ and, in turn, that $\widetilde{h}(x,t)= h(x+c_1-c_2,t)$. From this, one observes that the investor's optimal wealth process is invariant under this transformation, that is, using \eqref{eq: classical utility optimal wealth}, we have
\[
 X_t^* = h\left(h^{(-1)}(x_0,0)+A_t+M_t,t\right) = \widetilde{h}\left(\widetilde{h}^{(-1)}(x_0,0)+A_t+M_t,t\right), \quad t\in[0,T],
\]
where $A_t$ and $M_t$, $t\in[0,T]$, are as in \eqref{eq: mkt inp proc}. Moreover, one obtains that $\widetilde{h}^{(-1)}(x_0,0)=c_2$, and we easily conclude.
\end{remark}

\begin{remark}
Recall that in the works of Sharpe \emph{et al.} (see \citet{sha-gol-bly-2000}, \citet{sha-2001}, and \citet{gol-joh-sha-2008}) the market is incomplete. As mentioned in section \ref{sect: sharpe et al}, the developers of \emph{The Distribution Builder} introduce the additional assumption that the investor wants to achieve her distribution in a cost-efficient manner, in that any other investment strategy that achieves the desired distribution costs at least as much. This cost-efficiency property is guaranteed, however, in our complete market setting with an expected utility maximizer over terminal wealth (see \citet{ber-boy-2010}, \citet{dyb-1988a} and \citet{dyb-1988b}). Indeed, a straightforward change of variables shows that the budget constraint \eqref{eq: F budget constraint} can be rewritten as
\begin{equation}\label{eq: classical utility dist price}
 x_0 = \int_0^1 F^{(-1)}_{Z_T}(y)F^{(-1)}(1-y)dy,
\end{equation}
where $F_{Z_T}$ is the distribution function of the state price density $Z_T$ defined in \eqref{eq: state price density process} and $F$ is the investor's desired distribution function as in Theorem \ref{thm: classical utility infer pref}. The significance of this is that the right-hand side of \eqref{eq: classical utility dist price} is known to be the \emph{distributional price} (see \citet{dyb-1988a}), of the distribution $F$ in the given market. That is, among all $\m{F}_T$-measurable random variables $X_T^\pi$ with distribution function $F$ that can be achieved using a strategy $\pi\in\m{A}_T$, the one requiring the least initial endowment is given by the right-hand side of \eqref{eq: classical utility dist price}.  Thus, the investor who maximizes her expected utility also achieves her distributional price.
\end{remark}

\begin{example}\label{ex: term 1}
Suppose the investor aims at acquiring lognormally distributed terminal wealth, i.e. $\log X_T^*$ is centered normal with variance $b$ for some parameter $b>0$. Note that, initially, this choice does not specify a single distribution, but rather a family of distributions parameterized by $b$. The budget constraint \eqref{eq: F budget constraint} then determines the unique $b$ that is consistent with the investor's choice and utility criterion. To this end, it is easily seen that the inequality \eqref{eq: distribution growth condition} is satisfied, and therefore \eqref{eq: F budget constraint} yields that
\begin{equation}\label{eq: ex 1 budget constraint}
x_0 = \frac{1}{\sqrt{2\pi A_T}}\int_{-\infty}^{\infty} \exp\left( -\frac{1}{2A_T}y^2 +\sqrt{\frac{b}{A_T}}y-\sqrt{bA_T}\right)dy =  \exp\left(\frac{b}{2}-\sqrt{bA_T}\right).
\end{equation}
Straightforward manipulation of \eqref{eq: ex 1 budget constraint} yields the following necessary relationship between the investor's wealth and the market, namely
\[
  A_T + 2\log x_0 = b-2\sqrt{b A_T} +A_T = \left(\sqrt{b}-\sqrt{A_T}\right)^2\geq0,
\]
which, in turn, yields that $b = \left(\sqrt{A_T}+\sqrt{A_T+2\log x_0}\right)^2$.

From \eqref{eq: classical marginal utility}, we deduce the investor's marginal utility function, 
\begin{equation*}
U_T'(x)=x^{-\frac{1}{\beta}},\quad\text{with}\quad \beta:=1 + \bigg| 1 - \sqrt{\frac{b}{A_T}}\bigg|.
\end{equation*}
Therefore, we have two cases for the investor's utility function:
\begin{enumerate}
 \item[(a)] If $\beta>1$, then
\begin{equation*}
 U_T(x)=\frac{1}{1-\frac{1}{\beta}} x^{1-\frac{1}{\beta}}.
\end{equation*}
\item[(b)] If $\beta=1$, then $U_T(x)=\log x$.
\end{enumerate}

The underlying harmonic function (see \eqref{eq: h function fixed horizon}) is then given by
\begin{equation*}
 h(x,t)=\exp\left( \beta x+\frac{1}{2}\beta^2(A_T-A_t)\right)
\end{equation*}
and, in turn, \eqref{eq: thm 1 wealth} and \eqref{eq: thm 1 portfolio} yield the optimal policies
\begin{equation*}
 X_t^* = x_0 e^{ \left(\beta-\frac{1}{2}\beta^2\right)A_t+\beta M_t}\qquad\text{and}\qquad \pi_t^* =\beta x_0 e^{\left(\beta-\frac{1}{2}\beta^2\right)A_t+\beta M_t}\sigma^{(-1)}(t)\lambda(t).
\end{equation*}

\end{example}

\begin{example}\label{ex: term 2}
Suppose the investor with initial wealth $x_0>3$ targets that, if $X_T^*$ is her terminal wealth, then the random variable $g(X_T^*)$ has is centered normal with variance $b$ for some $b>A_T$, where
\begin{equation*}
 g(x)=\log(-1+\sqrt{1+x}), \qquad x\in(0,\infty).
\end{equation*}
As in the previous example, this specifies only a family of distributions, and the parameter $b$ is determined through the budget constraint as follows. We have that $F^{(-1)}(\Phi(x))=\exp(2\sqrt{b}x)+2\exp(\sqrt{b}x)$, and so the inequality \eqref{eq: distribution growth condition} is satisfied. The budget constraint \eqref{eq: F budget constraint} then shows the implicit relationship between the parameter $b$ in terms of $x_0$ and $A_T$, namely
\begin{equation}\label{eq: ex 2 budget constraint}
 x_0 = \exp\left( 2(b-\sqrt{bA_T})\right) + 2\exp\left(\frac{b}{2}-\sqrt{bA_T}\right).
\end{equation}
It is easily seen that there is a unique $b$ that satisfies \eqref{eq: ex 2 budget constraint} under our assumptions. From \eqref{eq: classical marginal utility}, the investor's marginal utility function is given by
\begin{equation*}
 U_T'(x) = \left(-1+\sqrt{1+x}\right)^{-\sqrt{\frac{A_T}{b}}}.
\end{equation*}
The underlying harmonic function in \eqref{eq: h function fixed horizon} is
\begin{equation*}
 h(x,t)=\left(\exp\left(2\sqrt{\frac{b}{A_T}}x + 2\frac{b}{A_T}(A_T-A_t)\right)+2\exp\left( \sqrt{\frac{b}{A_T}}x+\frac{1}{2}\frac{b}{A_T}(A_T-A_t)\right)\right).
\end{equation*}
Using the above and \eqref{eq: thm 1 wealth} and \eqref{eq: thm 1 portfolio}, one can find the optimal wealth and portfolio processes.
\end{example}

\subsection{Investment target placed at an intermediate investment time}

In Theorem \ref{thm: classical utility infer pref}, we showed that a Merton investor who specifies her desired distribution for wealth at terminal time $T$ will effectively determine her risk preferences at terminal time, and, in turn, the optimal policy throughout. Next, we consider an investor who specifies a distribution for her wealth to be realized at some arbitrary, but fixed, intermediate time $\widehat{T}\in(0,T)$.

As in Theorem \ref{thm: classical utility infer pref}, we find that the specification of this single distribution at time $\widehat{T}$, when combined with the investor's initial wealth and market input, is sufficient to determine the feasibility of the desired distribution, the optimal policies that achieve the investor's goal, and the investor's risk preferences. The proof relies on the results of Widder on the inversion of the Weierstrass transform (see \citet{hir-wid-1955}).

Before we proceed, we introduce some additional technical assumptions on the investor's chosen distribution.

\begin{assumption}\label{assump: classical intermediate}
 Let $F\colon(0,\infty)\to(0,1)$ be a chosen wealth distribution function. Let $\widehat{T}\in(0,T)$ and recall the function $A_t,t\in[0,T]$, in \eqref{eq: mkt inp proc}. Define the function $G\colon\bbR\to(0,\infty)$ associated to $F$ by
\begin{equation}\label{eq: classical intermediate f}
 G(x):=F^{(-1)}\left(\Phi\left(\frac{cx}{\sqrt{A_{\widehat{T}}}}\right)\right),\quad\text{with} \quad c:=\sqrt{ \frac{A_T-A_{\widehat{T}}}{2}}.
\end{equation}
We assume that:
\begin{enumerate}
 \item[(i)] $G$ extends analytically to an entire function on $\bbC$;
 \item[(ii)] $G$ satisfies the growth condition
\begin{equation*}
 \limsup_{|y|\to\infty} \frac{|G(x+iy)|}{|y|e^{y^2/4}}=0,\qquad\text{uniformly on closed subintervals of }\bbR \text{ containing }x;
\end{equation*}
 \item[(iii)] The function $g\colon\bbR\times(0,1)\to\bbC$ defined by  
\begin{equation}\label{eq: g intermediate}
 g(x,t):= \frac{1}{\sqrt{4\pi t}}\int_{-\infty}^\infty e^{-y^2/4t}G(x+iy)dy
\end{equation}
is real-valued and nonnegative for all $(x,t)\in\bbR\times(0,1)$.
\end{enumerate}
\end{assumption}

We are now ready to state the results. We recall that $I_T\colon(0,\infty)\to(0,\infty)$ is the inverse of the investor's marginal utility function $U_T'\colon(0,\infty)\to(0,\infty)$, and that $\Phi\colon\bbR\to(0,1)$ denotes the distribution function of the standard normal random variable.

\begin{theorem}\label{thm: classical utility intermediate time}
Suppose the investor with initial wealth $x_0>0$ targets her wealth $X_{\widehat{T}}^*$, at some intermediate time $\widehat{T}\in(0,T)$, to have distribution function $F$ satisfying Assumption \ref{assump: distributions}. Let $A_t$ and $M_t$, $t\in[0,T]$, be as in \eqref{eq: mkt inp proc}. Then, the following hold.

$i)$ The investor's target can be attained only if $F$ satisfies the budget constraint
\begin{equation}\label{eq: F budget constraint intermediate}
 x_0=\frac{1}{\sqrt{2\pi A_{\widehat{T}}}}\int_{-\infty}^{\infty} e^{-\frac{y^2}{2A_{\widehat{T}}}}F^{(-1)}\left( \Phi\left( \frac{y-A_{\widehat{T}}}{\sqrt{A_{\widehat{T}}}}\right)\right)dy.
\end{equation}

$ii)$ If $F$ satisfies \eqref{eq: F budget constraint intermediate} and, in addition, Assumption \ref{assump: classical intermediate}, then the inverse $I_T$ of the investor's marginal utility function is given by
\begin{equation}\label{eq: classical intermediate result I}
 I_T(x)=\frac{1}{\sqrt{2\pi(A_T-A_{\widehat{T}})}}\int_{-\infty}^\infty e^{-\frac{1}{2}\frac{y^2}{(A_T-A_{\widehat{T}})}}F^{(-1)}\left(\Phi\left(\frac{-\log x+iy}{\sqrt{A_{\widehat{T}}}}\right)\right)dy.
\end{equation}

$iii)$ If $F$ satisfies \eqref{eq: F budget constraint intermediate} and, in addition, Assumption \ref{assump: classical intermediate}, then the investor's optimal wealth and portfolio processes are given by
\begin{equation}\label{eq: thm 2 wealth}
 X_t^* =  h\left( h^{(-1)}(x_0,0)+A_t+M_t,t\right), \qquad t\in[0,T],
\end{equation}
and 
\begin{equation}\label{eq: thm 2 portfolio}
 \pi_t^* =  h_x\left( h^{(-1)}\left(X_t^*,t\right),t\right)\sigma^{(-1)}(t)\lambda(t), \qquad t\in[0,T],
\end{equation}
respectively, where the function $h$ is given by
\begin{equation}\label{eq: thm 2 h}
 h(x,t) = \frac{1}{\sqrt{2\pi(A_T-A_t)}}\int_{-\infty}^\infty e^{-\frac{1}{2}\frac{(x-y)^2}{(A_T-A_t)}}I_T(e^{-y})dy,
\end{equation}
with $I_T$ as in \eqref{eq: classical intermediate result I}.

\end{theorem}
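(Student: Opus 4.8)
The plan is to mirror the terminal-time argument of Theorem~\ref{thm: classical utility infer pref} for as long as possible and then to confront the one genuinely new difficulty: recovering the terminal data $h(\cdot,T)=I_T(e^{-\cdot})$ from knowledge of $h$ only at the intermediate time $\widehat{T}<T$. For part $i)$ I would repeat the computation leading to \eqref{eq: F in terms of h}, now evaluating the optimal-wealth representation \eqref{eq: classical utility optimal wealth} of Proposition~\ref{prop: classical utility heat equation} at $\widehat{T}$ rather than $T$. Since $M_{\widehat{T}}$ is centered normal with variance $A_{\widehat{T}}$ and $h$ is spatially strictly increasing, inverting gives $F(y)=\Phi\left((h^{(-1)}(y,\widehat{T})-h^{(-1)}(x_0,0)-A_{\widehat{T}})/\sqrt{A_{\widehat{T}}}\right)$. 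Normalizing $h^{(-1)}(x_0,0)=-A_{\widehat{T}}$ (admissible by the argument of Remark~\ref{rem: affine transformation}, with $\widehat{T}$ in place of $T$) yields the clean identity
\begin{equation*}
 h(x,\widehat{T}) = F^{(-1)}\!\left(\Phi\!\left(\frac{x}{\sqrt{A_{\widehat{T}}}}\right)\right),\qquad x\in\bbR.
\end{equation*}
The budget constraint \eqref{eq: F budget constraint intermediate} then follows from the binding relation $x_0=\bbE[Z_{\widehat{T}}X_{\widehat{T}}^*]$, a consequence of the martingale property of $Z_tX_t^*$ with $Z_t$ as in \eqref{eq: state price density process}, after substituting $X_{\widehat{T}}^*=F^{(-1)}(\Phi(M_{\widehat{T}}/\sqrt{A_{\widehat{T}}}))$ and completing the square in the resulting Gaussian integral, exactly as in part $i)$ of Theorem~\ref{thm: classical utility infer pref}.

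For part $ii)$ the crucial observation is that $h$ solves the terminal-value heat problem \eqref{eq: classical utility h pde}, so by the representation formula $h(\cdot,\widehat{T})$ is the Gauss--Weierstrass transform of the unknown terminal data $h(\cdot,T)$ with variance $A_T-A_{\widehat{T}}$:
\begin{equation*}
 h(x,\widehat{T}) = \frac{1}{\sqrt{2\pi(A_T-A_{\widehat{T}})}}\int_{-\infty}^{\infty} e^{-\frac{(x-y)^2}{2(A_T-A_{\widehat{T}})}}\,h(y,T)\,dy.
\end{equation*}
Recovering $h(\cdot,T)$ therefore amounts to inverting this transform, i.e.\ to solving the \emph{backward} heat equation from $\widehat{T}$ to $T$; this is the ill-posed step and the heart of the proof. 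I would invoke Widder's inversion of the Weierstrass transform (see \citet{hir-wid-1955}): when the data extends to an entire function of suitable growth, the inverse is realized by a complex shift,
\begin{equation*}
 h(y,T)=\frac{1}{\sqrt{2\pi(A_T-A_{\widehat{T}})}}\int_{-\infty}^{\infty} e^{-\frac{v^2}{2(A_T-A_{\widehat{T}})}}\,h(y+iv,\widehat{T})\,dv,
\end{equation*}
a formula one verifies directly on the exponentials $e^{ky}$. Writing $x=e^{-y}$ and using $h(y+iv,\widehat{T})=F^{(-1)}(\Phi((y+iv)/\sqrt{A_{\widehat{T}}}))$ turns this into \eqref{eq: classical intermediate result I}.

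The main obstacle is making this inversion rigorous, and this is exactly what Assumption~\ref{assump: classical intermediate} is designed to supply. Rescaling by $c=\sqrt{(A_T-A_{\widehat{T}})/2}$ normalizes the transform to unit ``time'': with $G$ as in \eqref{eq: classical intermediate f} one checks that $G(\xi)=h(c\xi,\widehat{T})$ and, after the substitutions $y=c\xi$, $v=cw$, that the function $g(\xi,t)$ of \eqref{eq: g intermediate} is the heat flow of $G$, converging as $t\uparrow 1$ to $h(c\xi,T)$. Condition $(i)$ (entire extension) and condition $(ii)$ (the growth bound $|G(x+iy)|/(|y|e^{y^2/4})\to 0$) are precisely the hypotheses of Widder's theorem guaranteeing convergence of the defining integral and validity of the inversion, while condition $(iii)$ (real-valuedness and nonnegativity of $g$) ensures that the recovered $h(\cdot,T)$, and hence $I_T$, is a genuine positive marginal-utility inverse rather than a merely formal complex expression. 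I expect the bulk of the technical labor to lie in checking that the conditions of \citet{hir-wid-1955} apply and in justifying the interchange of limits and integration along the complex contour.

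Finally, part $iii)$ is immediate once $I_T$ is in hand. Substituting the terminal condition $h(\cdot,T)=I_T(e^{-\cdot})$ back into the forward representation formula \eqref{eq: h function fixed horizon} produces the stated $h$ in \eqref{eq: thm 2 h}, and the optimal wealth \eqref{eq: thm 2 wealth} and portfolio \eqref{eq: thm 2 portfolio} then follow directly from Proposition~\ref{prop: classical utility heat equation}.
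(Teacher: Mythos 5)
Your proposal is correct and follows essentially the same route as the paper's proof: the same normalization $h^{(-1)}(x_0,0)=-A_{\widehat{T}}$ and martingale budget-constraint argument for part $i)$, the same reduction of part $ii)$ to Widder's complex-shift inversion of the Weierstrass transform (with Assumption \ref{assump: classical intermediate} supplying the Hirschman--Widder hypotheses, and the dominated-convergence/limit-interchange step you flag being exactly the technical step the paper carries out), and the same appeal to the Cauchy representation formula for part $iii)$. The only difference is presentational: you state the inversion in unscaled variables and verify it on exponentials before rescaling by $c$, whereas the paper rescales first and then applies the cited theorems.
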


\begin{proof}
Recall that although the investor is specifying desired distributional data at time $\widehat{T}\in(0,T)$, her investment horizon is $[0,T]$.  If the investor targets her wealth at time $\widehat{T}$ to have distribution function $F$, then \eqref{eq: classical utility optimal wealth} yields
\begin{eqnarray*}
 F(y) &=& \bbP(X_{\widehat{T}}^*\leq y) = \bbP\left(h\left(h^{(-1)}(x_0,0)+M_{\widehat{T}}+A_{\widehat{T}},\widehat{T}\right)\leq y\right)\\
&=& \Phi\left( \frac{h^{(-1)}(y,\widehat{T})-h^{(-1)}(x_0,0)-A_{\widehat{T}}}{\sqrt{A_{\widehat{T}}}}\right),
\end{eqnarray*}
where we used that $M_{\widehat{T}}$ is centered normal with variance $A_{\widehat{T}}$. Inverting, we deduce that
\begin{equation}\label{eq: classical utility intermediate pf 3}
 h(x,\widehat{T}) = F^{(-1)}\left(\Phi\left(\frac{x}{\sqrt{A_{\widehat{T}}}}\right)\right),\qquad x\in\bbR,
\end{equation}
where, in analogy to the proof of Theorem \ref{thm: classical utility infer pref} and Remark \ref{rem: affine transformation}, we have chosen 
\begin{equation}\label{eq: classical utility intermediate pf h inverse}
h^{(-1)}(x_0,0)=-A_{\widehat{T}}.
\end{equation}

To show i), observe that from \eqref{eq: classical utility optimal wealth}, \eqref{eq: classical utility intermediate pf 3}, and \eqref{eq: classical utility intermediate pf h inverse} we have
\begin{equation}\label{eq: pf intermediate} 
 X_{\widehat{T}}^* = F^{(-1)}\left(\Phi\left(\frac{M_{\widehat{T}}}{\sqrt{A_{\widehat{T}}}}\right)\right).
\end{equation}
Recall $Z_{\widehat{T}}$ from \eqref{eq: state price density process}. Then, \eqref{eq: pf intermediate} yields
\begin{eqnarray*}
 x_0&=&\bbE(Z_{\widehat{T}}X_{\widehat{T}}^*)= \frac{1}{\sqrt{2\pi A_{\widehat{T}}}}\int_{-\infty}^{\infty} e^{-\frac{y^2}{2A_{\widehat{T}}}}F^{(-1)}\left(\Phi\left(\frac{y-A_{\widehat{T}}}{\sqrt{A_{\widehat{T}}}}\right)\right)dy,
\end{eqnarray*}
where the first equality is due to the well-known budget constraint in this model (see, for example, \citet{kar-shr-1998}) and the fact that $Z_tX_t^*, t\in[0,T],$ is a $\bbP$-martingale. Recall that $F$ satisfies the growth condition \eqref{eq: distribution growth condition}, and thus the above integral converges.

To prove ii), first note that by \eqref{eq: classical utility intermediate pf 3} and the uniqueness of the solution to \eqref{eq: classical utility h pde}, we must have
\begin{equation}\label{eq: classical utility intermediate pf 2}
F^{(-1)}\left(\Phi\left(\frac{x}{\sqrt{A_{\widehat{T}}}}\right)\right)=\frac{1}{\sqrt{2\pi(A_T-A_{\widehat{T}})}}\int_{-\infty}^{\infty} e^{-\frac{1}{2}\frac{(x-y)^2}{(A_T-A_{\widehat{T}})}}I_T(e^{-y})dy,\qquad x\in\bbR.
\end{equation}
By a change of variabless, we deduce that this is equivalent to 
\begin{equation}\label{eq: classical utility intermediate pf 4}
 F^{(-1)}\left(\Phi\left(\frac{cx}{\sqrt{A_{\widehat{T}}}}\right)\right)=\frac{1}{\sqrt{4\pi}}\int_{-\infty}^{\infty} e^{-(x-y)^2/4}I_T\left(e^{-cy}\right)dy,
\end{equation}
where $c:=\left(\frac{1}{2}(A_T-A_{\widehat{T}})\right)^{\frac{1}{2}}$. Next, we note that the right-hand side of \eqref{eq: classical utility intermediate pf 4} is the Weierstrass transform of the function $x\mapsto I(e^{-cx})$.  By \citet[Theorem 12.4 (p. 204); see also Definition 3.2 and Theorem 3.2 (p. 180)]{hir-wid-1955}, for such a representation to exist and to converge for all $x\in\bbR$, it is necessary and sufficient that the function $G\colon\bbR\to(0,\infty)$, defined in \eqref{eq: classical intermediate f}, satisfies conditions (i), (ii) and (iii) of Assumption \ref{assump: classical intermediate}. Under these conditions, Widder's theorem on the inversion of the Weierstrass transform (see \citet[Theorem 7.4 p. 191]{hir-wid-1955}) yields that
\begin{equation}\label{eq: classical utility intermediate pf 1}
 I_T(e^{-cx}) = \lim_{t\uparrow 1}g(x,t)= \lim_{t\uparrow 1}\frac{1}{\sqrt{4\pi t}}\int_{-\infty}^{\infty} e^{-y^2/4t}G(x+iy)dy,\qquad \text{a.e. }x\in\bbR,
\end{equation}
with $g$ as in \eqref{eq: g intermediate}. On the other hand, because both sides of \eqref{eq: classical utility intermediate pf 1} are continuous in $x$, this equality holds for all $x\in\bbR$. Moreover, since $G$ satisfies the growth condition (ii) of Assumption \ref{assump: classical intermediate}, the integral in \eqref{eq: classical utility intermediate pf 1} is dominated by
\begin{equation*}
 \int_{-\infty}^\infty |y|e^{-y^2/4t}e^{-y^2/4}dy=\int_{-\infty}^{\infty} |y| e^{-\frac{(1+t)}{4t}y^2}dy\leq\int_{-\infty}^{\infty}|y|e^{-y^2/2}dy.
\end{equation*}
Since the dominant integral converges and is independent of $t$, we have by the dominated convergence theorem that 
\[
 I_T(e^{-cx}) = \lim_{t\uparrow 1}g(x,t) = \frac{1}{\sqrt{4\pi}}\int_{-\infty}^{\infty} e^{-y^2/4}G(x+iy)dy,
\]
which yields \eqref{eq: classical intermediate result I} after a change of variables.

Finally, part iii) follows from the representation formula for the solution of the Cauchy problem \eqref{eq: classical utility h pde}.
\end{proof}

\begin{example}
Suppose the investor desires lognormally distributed wealth at time $\widehat{T}\in(0,T)$, i.e. $\log X_{\widehat{T}}^*$ is centered normal with variance $b$ for some $b>0$. As in Example \ref{ex: term 1}, we note that this specifies only a family of distributions. The budget constraint \eqref{eq: F budget constraint intermediate} implies that 
\[
 A_{\widehat{T}} + 2\log x_0 = b-2\sqrt{b A_{\widehat{T}}} +A_{\widehat{T}} = \left(\sqrt{b}-\sqrt{A_{\widehat{T}}}\right)^2\geq0,
\]
and therefore, the distribution that is consistent with the investor's choice and criterion has parameter $b$ given uniquely by
\begin{equation*}
b = \left( \sqrt{A_{\widehat{T}}} + \sqrt{A_{\widehat{T}} + 2\log x_0} \right)^2. 
\end{equation*}
The function $G$ (see \eqref{eq: classical intermediate f}) then becomes
\begin{equation*}
 G(x)= e^{kx},\quad\text{with}\quad k:= \sqrt{\frac{A_T-A_{\widehat{T}}}{2}}\left(1 + \bigg| 1 - \sqrt{\frac{b}{A_{\widehat{T}}}}\bigg|\right).
\end{equation*}
This function satisfies (i), (ii) and (iii) of Assumption \ref{assump: classical intermediate}.  

Using \eqref{eq: classical intermediate result I}, we easily see that the inverse of the investor's marginal utility is given by
\begin{equation*}
 I_T(e^{-x})= e^{\beta x-k^2},\quad\text{with}\quad \beta :=1 + \bigg| 1 - \sqrt{\frac{b}{A_{\widehat{T}}}}\bigg|.
\end{equation*}
Therefore, the investor's marginal utility function is given by
\begin{equation*}
 U_T'(x) = e^{-\frac{1}{\beta}}x^{-\frac{1}{\beta}},
\end{equation*}
while the underlying harmonic function (see \eqref{eq: thm 2 h}) is $h(x,t)=e^{-k^2}\exp\left( \beta x + \frac{1}{2}\beta^2(A_T-A_t)\right).$ Hence \eqref{eq: thm 2 wealth} and \eqref{eq: thm 2 portfolio} yield the optimal policies
\begin{equation*}
 X_t^* = x_0 e^{ \left(\beta-\frac{1}{2}\beta^2\right)A_t+\beta M_t}\qquad\text{and}\qquad \pi_t^* =\beta x_0 e^{\left(\beta-\frac{1}{2}\beta^2\right)A_t+\beta M_t}\sigma^{(-1)}(t)\lambda(t).
\end{equation*}

\end{example}

\section{Targeted wealth distributions in a flexible horizon setting}\label{sect: flexible}

We continue our study of how an investor's desired distribution for future wealth can be used to recover her risk preferences and construct her optimal policies. In the previous section, we considered a Merton investor with a fixed investment horizon $[0,T]$. In this section, we allow for the investor to have flexibility in her investment horizon. There are practical reasons for allowing such flexibility. For instance, the investor may not know \emph{a priori} until when she will be investing, or may wish to invest indefinitely, or may wish have the flexibility to roll over her portfolio or otherwise extend her investment horizon beyond the original prespecified terminal time. Flexibility in the investment horizon falls outside the classical fixed-horizon Merton problem. An appropriate investment criterion is instead the forward investment performance framework, which we reviewed in section \ref{sect: background forward}. Similar to the fixed horizon setting of section \ref{sect: fixed}, we show how the investor's targeted distribution, her initial wealth, and an estimate of the market price of risk can be used to: 
\begin{itemize}
 \item determine if the chosen distribution is attainable in this market environment;
 \item infer the investor's risk preferences at initial time and describe how they change dynamically throughout the investment horizon; and
 \item describe how the investor should invest in the market to attain her goal.
\end{itemize}

\subsection{Investment target at an arbitrary point for an investor without a fixed terminal horizon}\label{subsect: flex}

We consider an investor in a flexible investment horizon setting who places a desired distribution for wealth at some fixed, but arbitrary, future time. The following result shows that the investor's desired distribution for future wealth, when combined with her initial wealth and market input, determines the Fourier transform of a Borel measure $\nu\in\m{B}^+(\bbR^+)$, where $\m{B}^+(\bbR^+)$ is as in \eqref{eq: borel meas}. As discussed in section \ref{sect: forward review monotone}, this measure is the defining element for the functions $u$ and $h$ in the monotone forward investment performance framework. If, in addition, the measure satisfies \eqref{eq: nu admissible}, then one can also find the optimal wealth process, the optimal investment strategy $\pi^*$ that achieves it, and the forward investment performance process via \eqref{eq: optimal wealth}, \eqref{eq: optimal portfolio}, and \eqref{eq: forward performance}, respectively.

We recall that the function $\Phi\colon\bbR\to(0,1)$ denotes the distribution function of the standard normal random variable, and we denote by $\phi$ its density.

\begin{theorem}\label{prop: nu gen F}
Suppose the investor with initial wealth $x_0>0$ targets her wealth $X_T^*$ at some time $T\in(0,\infty)$ to have distribution function $F$ satisfying Assumption \ref{assump: distributions}. Let $A_t$ and $M_t$, $t\geq0$, be as in \eqref{eq: mkt inp proc}. Then, the following hold.

$i)$ The investor's target can be attained only if $F$ satisfies the budget constraint
\begin{equation}\label{eq: F budget constraint flexible}
 x_0 = \frac{1}{\sqrt{2\pi A_T}}\int_{-\infty}^{\infty} e^{-\frac{y^2}{2A_T}} F^{(-1)}\left( \Phi\left( \frac{y-A_T}{\sqrt{A_T}}\right)\right)dy.
\end{equation}

$ii)$ If $F$ satisfies \eqref{eq: F budget constraint flexible}, then the Fourier transform of the underlying measure $\nu$ is given by
\begin{equation}\label{eq: ft gen F}
 \varphi_\nu(x) =\frac{1}{A_T\sqrt{2\pi}} \int_{-\infty}^{\infty} e^{- \frac{(ix-y)^2}{2A_T}} \frac{\phi \left( \frac{y}{\sqrt{A_T}} \right) }{ f\left( F\left( \Phi\left( \frac{y}{\sqrt{A_T}} \right)\right)\right)}dy,
\end{equation}
where $f$ is the density of $F$. Moreover, if $u_0$ is the investor's initial datum, then
\begin{equation}\label{eq: initial datum}
 u_0'(x) = \exp\left(-h_0^{(-1)}(x)\right),
\end{equation}
with $h_0$ given by
\begin{equation}\label{eq: h initial datum}
 h_0(x)=\int_{0+}^{\infty} \frac{e^{yx}}{y}\nu(dy).
\end{equation}

$iii)$ If the above measure $\nu$ satisfies \eqref{eq: nu admissible}, then the investor's optimal wealth and portfolio processes are given by
\begin{equation}\label{eq: forward optimal wealth}
 X_t^* = h\left( h^{(-1)}(x_0,0)+A_t+M_t,A_t\right),\qquad t\geq0,
\end{equation}
and 
\begin{equation}\label{eq: forward optimal portfolio}
 \pi_t^* =  h_x\left( h^{(-1)}\left(X_t^*,A_t\right),A_t\right)\sigma^{(-1)}(t)\lambda(t),\qquad t\geq0,
\end{equation}
respectively, where $h$ is given by
\begin{equation}\label{eq: thm 3 h}
 h(x,t) = \int_{0+}^{\infty} \frac{e^{yx-\frac{1}{2}y^2t}}{y}\nu(dy).
\end{equation}

\end{theorem}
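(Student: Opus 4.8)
The plan is to follow the template of Theorem~\ref{thm: classical utility infer pref}, replacing the fixed-horizon harmonic function by its forward-performance counterpart from Proposition~\ref{prop: h nu rep}, and then to extract $\nu$ through the inversion machinery of Proposition~\ref{prop:inv}.

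First, for part $i)$, I would substitute $t=T$ into the forward optimal-wealth representation \eqref{eq: optimal wealth}, obtaining $X_T^*=h(h^{(-1)}(x_0,0)+A_T+M_T,A_T)$. Since $h(\cdot,A_T)$ is strictly increasing, inverting $F(y)=\bbP(X_T^*\le y)$ and using that $M_T$ is centered normal with variance $A_T$ yields
\[
 h(x,A_T)=F^{(-1)}\!\left(\Phi\!\left(\tfrac{x}{\sqrt{A_T}}\right)\right),
\]
after the normalization $h^{(-1)}(x_0,0)=-A_T$, which is admissible without loss of generality exactly as in Remark~\ref{rem: affine transformation} (here it corresponds to reweighting $\nu$ by an exponential factor, cf.\ Proposition~\ref{prop: scaling}). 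Consequently $X_T^*=F^{(-1)}(\Phi(M_T/\sqrt{A_T}))$, and the budget identity $x_0=\bbE(Z_TX_T^*)$, valid because $Z_tX_t^*$ is a $\bbP$-martingale, gives \eqref{eq: F budget constraint flexible} after inserting $Z_T=e^{-M_T-A_T/2}$ from \eqref{eq: state price density process} and completing the square; the growth bound \eqref{eq: distribution growth condition} guarantees convergence.

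The heart of the argument is part $ii)$. By Proposition~\ref{prop: h nu rep}, the function $h$ above arises from a measure $\nu\in\m{B}^+(\bbR^+)$, and by Proposition~\ref{prop:inv} its Fourier transform is $\varphi_\nu(x)=h_x(ix,0)$, where $h_x(\cdot,0)$ is the entire extension of the Laplace transform $h_x(x,0)=\int_{0+}^\infty e^{yx}\nu(dy)$. Differentiating the slice identity gives
\[
 h_x(x,A_T)=\frac{1}{\sqrt{A_T}}\,\frac{\phi(x/\sqrt{A_T})}{f\!\left(F^{(-1)}(\Phi(x/\sqrt{A_T}))\right)}.
\]
The key observation is that $h_x$ solves the same backward heat equation \eqref{eq: h pde} as $h$, so writing $v(x,s):=h_x(x,A_T-s)$ converts it into a \emph{forward} heat solution with data $v(\cdot,0)=h_x(\cdot,A_T)$; evaluating at $s=A_T$ recovers $h_x(\cdot,0)$ as the Gaussian convolution
\[
 h_x(x,0)=\frac{1}{\sqrt{2\pi A_T}}\int_{-\infty}^{\infty}e^{-\frac{(x-z)^2}{2A_T}}\,h_x(z,A_T)\,dz.
\]
Substituting the expression for $h_x(z,A_T)$ and replacing $x$ by $ix$ then yields \eqref{eq: ft gen F}. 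The two formulas for the initial datum follow immediately: setting $t=0$ in \eqref{eq: h nu rep} gives \eqref{eq: h initial datum}, while putting $t=0$ in the transformation \eqref{eq: u and h relation} gives $u_0'(h_0(x))=e^{-x}$, i.e.\ \eqref{eq: initial datum}.

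The main obstacle is the legitimacy of the last two displays — exchanging the order of integration in the convolution, verifying that the integral converges, and justifying the analytic continuation $x\mapsto ix$. This is where Assumption~\ref{assump: distributions} is essential: the bound \eqref{eq: distribution growth condition} forces $h(\cdot,A_T)$, and hence $h_x(\cdot,A_T)$, to grow at most exponentially, so that the Gaussian kernel $e^{-(x-z)^2/(2A_T)}$ dominates for every complex $x$, legitimizing both Fubini and the continuation (and re-confirming $\nu\in\m{B}^+(\bbR^+)$ so that Proposition~\ref{prop:inv} applies). Finally, part $iii)$ is immediate: once $\nu$ additionally satisfies \eqref{eq: nu admissible}, Theorem~\ref{thm: fpp all objects of interest} directly furnishes the optimal wealth and portfolio processes \eqref{eq: forward optimal wealth}--\eqref{eq: forward optimal portfolio} with $h$ as in \eqref{eq: thm 3 h}.
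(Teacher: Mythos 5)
Your proposal is correct and follows essentially the same route as the paper: normalize $h^{(-1)}(x_0,0)=-A_T$ via the scaling in Proposition \ref{prop: scaling}, read off $h(\cdot,A_T)=F^{(-1)}(\Phi(\cdot/\sqrt{A_T}))$, use the binding budget constraint for part $i)$, recover $h_x(\cdot,0)$ from $h_x(\cdot,A_T)$ by the (well-posed direction of the) Gaussian convolution under the Tychonov-type growth bound, and identify $\varphi_\nu(x)=h_x(ix,0)$ via Proposition \ref{prop:inv}. Note that your denominator $f\left(F^{(-1)}\left(\Phi\left(\cdot\right)\right)\right)$ is the correct outcome of differentiating the slice identity; the $f\left(F\left(\Phi\left(\cdot\right)\right)\right)$ appearing in \eqref{eq: ft gen F} is evidently a typographical slip in the statement.
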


\begin{proof}
Let $h(x,t)$ be given by \eqref{eq: h nu rep} for some $\nu\in\m{B}^+(\bbR^+)$. Recall from Proposition \ref{prop: scaling} that we can assume, without loss of generality, that $\nu$ has arbitrary total mass. Therefore, we assume that $\nu$ is such that it satisfies $\int_{0+}^{\infty}\frac{e^{-A_Ty}}{y}\nu(dy)=x_0$ or, equivalently, that
\begin{equation}\label{eq: flexible h inverse}
 h^{(-1)}(x_0,0)=-A_T.
\end{equation}
Then, using \eqref{eq: optimal wealth}, we obtain that
\begin{equation}\label{eq: wealth h pf 1}
 X_T^*  = h(M_T,A_T).
\end{equation}
If the investor targets her wealth at time $T$ to have distribution function $F$, then using that $M_T$ is centered normal with variance $A_T$, we deduce that
\begin{equation}\label{eq: dist ow}
 F(y)=\bbP(X_T^{*}\leq y) = \Phi\left(\frac{ h^{(-1)}\left(y,A_T\right)}{\sqrt{A_T}}\right)
\end{equation}
and, in turn, that
\begin{equation}\label{eq: thm pf 1}
 h(x,A_T)=F^{(-1)}\left( \Phi\left( \frac{x}{\sqrt{A_T}}\right)\right).
\end{equation}
Part i) then follows from the well-known budget constraint in this model (see, for example, \citet{kar-shr-1998}), \eqref{eq: wealth h pf 1}, \eqref{eq: thm pf 1}, \eqref{eq: state price density process}, \eqref{eq: flexible h inverse} and \eqref{eq: distribution growth condition}.

We now prove ii). By \eqref{eq: h pde} and \eqref{eq: thm pf 1}, the function $h$ must solve
\begin{equation}\label{eq: h pde pf flexible}
 \left\{\begin{array}{ll} h_t+\frac{1}{2}h_{xx}=0,& (x,t)\in(0,\infty)\times(0,A_T)\\
         h(x,A_T) = F^{(-1)}\left( \Phi\left( \frac{x}{\sqrt{A_T}}\right)\right),\quad & x\in(0,\infty).
        \end{array}\right.
\end{equation}
Condition \eqref{eq: distribution growth condition} implies that the terminal data satisfies the Tychonov condition (see \citet[Chapter 1]{fri-1964}) and so the unique solution to \eqref{eq: h pde pf flexible} is given by the convolution
\begin{equation*}
 h(x,t) = \frac{1}{\sqrt{2\pi(A_T-t)}}\int_{-\infty}^{\infty} e^{-\frac{1}{2}\frac{(x-y)^2}{(A_T-t)}} h(y,A_T)dy.
\end{equation*}
Since $x\mapsto h(x,A_T)$ is differentiable almost everywhere, we obtain
\begin{equation*}
 h_x(x,t) = \frac{1}{\sqrt{2\pi(A_T-t)}}\int_{-\infty}^{\infty} e^{-\frac{1}{2}\frac{(x-y)^2}{(A_T-t)}} h_x(y,A_T)dy.
\end{equation*}
By Proposition \ref{prop:inv}, we then conclude that the function $\varphi_\nu\colon\bbR\to\bbC$ given by
\begin{eqnarray*}
 \varphi_\nu(x) &=& h_x(ix,0) \\
&=&\frac{1}{\sqrt{2\pi A_T^2}} \int_{-\infty}^{\infty} e^{- \frac{(ix-y)^2}{2A_T}} \frac{\phi \left( \frac{y}{\sqrt{A_T}} \right) }{ f\left( F\left( \Phi\left( \frac{y}{\sqrt{A_T}} \right)\right)\right)}dy 
\end{eqnarray*}
is the Fourier transform of the implied measure $\nu$. Equations \eqref{eq: initial datum} and \eqref{eq: h initial datum} then follow from \eqref{eq: u pde}, \eqref{eq: u and h relation}, and Proposition \ref{prop: h nu rep}.

Part iii) follows by Theorem \ref{thm: fpp all objects of interest} and \eqref{eq: h nu rep}.

\end{proof}

\begin{remark}\label{rem: weaker growth condition F}
 The growth assumption \eqref{eq: distribution growth condition} for the distribution $F$ in Assumption \ref{assump: distributions} can be slightly relaxed. Indeed, in order for the Tychonov condition to be satisfied in \eqref{eq: h pde pf flexible}, it is sufficient that
\begin{equation}\label{eq: F inverse growth weaker}
 F^{(-1)}(\Phi(x))\leq K e^{a x^2},\qquad   x\in\bbR,
\end{equation}
for some positive constants $K$ and $a<\frac{1}{2}$. In Example \ref{ex: 3}, we analyze a case in which $F$ satisfies \eqref{eq: F inverse growth weaker} but not \eqref{eq: distribution growth condition}.
\end{remark}

\begin{example}
Suppose that the investor desires lognormally distributed wealth at time $T$, i.e. $\log X_{T}^{*}$ is centered normal with variance $b$ for some $b>0$. Working as in the previous examples, in order to specify the distribution that is consistent with the investor's choice and criterion, we use the budget constraint \eqref{eq: F budget constraint flexible} to find that
\[
 A_T + 2\log x_0 = b-2\sqrt{b A_T} +A_T = \left(\sqrt{b}-\sqrt{A_T}\right)^2\geq0.
\]
Thus, $b$ is given uniquely by
\begin{equation*}
b = \left( \sqrt{A_T} + \sqrt{A_T + 2\log x_0}\right)^2.
\end{equation*}
Using this and \eqref{eq: ft gen F}, the Fourier transform of $\nu$ is then given by
\begin{equation*}
 \varphi_\nu(x)=\beta\exp\left(ix\beta\right),\quad\text{with}\quad \beta:=1 + \bigg| 1 - \sqrt{\frac{b}{A_T}}\bigg|.
\end{equation*}
We easily see that this is the Fourier transform of the Dirac point mass $\nu = \beta\delta_{\beta}$, which satisfies the admissibility condition \eqref{eq: nu admissible}. Using \eqref{eq: initial datum} and \eqref{eq: h initial datum}, we find that $u_0'(x) = x^{-\frac{1}{\beta}}$ and, using \eqref{eq: h nu rep}, we deduce that $h(x,t)= e^{\beta x-\frac{1}{2}\beta^2 t}$.

The associated optimal wealth and portfolio processes are given by
\begin{equation*}
 X_t^* = x_0e^{\left( \beta-\frac{1}{2}\beta^2\right)A_t + \beta M_t}\qquad \text{and}\qquad \pi_t^* = \beta x_0e^{\left( \beta-\frac{1}{2}\beta^2\right)A_t + \beta M_t}\sigma^{(-1)}(t)\lambda(t).
\end{equation*}
Finally, we deduce the investor's forward investment performance process. If, for instance, $\beta>1$, the investor's forward investment performance is given by
\begin{equation*}
 U(x,t) = \frac{ \beta^{\frac{\beta-1}{\beta}}}{\beta-1} x^{\frac{\beta-1}{\beta}}e^{-\frac{\beta-1}{2}A_t}.
\end{equation*}
\end{example}

\begin{example}
 Suppose that the investor with initial wealth $x_0>3$ desires that, if $X_T^*$ is her wealth at time $T$, then the random variable $g(X_T^*)$ is centered normal with variance $b$ for some $b>A_T$, where $g\colon(0,\infty)\to\bbR$ is given by $g(x) = \log(-1+\sqrt{1+x})$. Again, note that this is a family of distributions. Using the budget constraint \eqref{eq: F budget constraint flexible} we find that
\begin{equation}\label{eq: ex 5 budget constraint}
 x_0 = \exp\left( 2\left( b-\sqrt{A_Tb}\right)\right) + 2\exp\left( \frac{b}{2}-\sqrt{A_Tb}\right).
\end{equation}
Under our assumptions, it is easily seen that there is a unique $b$ that satisfies \eqref{eq: ex 5 budget constraint}.

Next, the Fourier transform of the implied measure $\nu$ is found via \eqref{eq: ft gen F}. Specifically, 
\begin{eqnarray*}
 \varphi_\nu(x)&=& 2\sqrt{\frac{b}{A_T}}\left(\exp\left(2ix\sqrt{\frac{b}{A_T}}+2b-2\sqrt{bA_T}\right)+\exp\left(ix\sqrt{\frac{b}{A_T}}+\frac{b}{2}+\sqrt{bA_T}\right)\right)\\
&=& \sqrt{\frac{b}{A_T}}\left(k_1 \exp\left(2ix\sqrt{\frac{b}{A_T}}\right) + k_2 \exp\left(ix\sqrt{\frac{b}{A_T}}\right)\right),
\end{eqnarray*}
where the constants $k_1$ and $k_2$ are given by
\[
 k_1 = \frac{2e^{2b-2\sqrt{bA_T}}}{e^{2b-2\sqrt{bA_T}}+2e^{b/2-\sqrt{bA_T}}}\qquad\text{and}\qquad k_2 = \frac{2 e^{b/2-\sqrt{bA_T}}}{e^{2b-2\sqrt{bA_T}}+2e^{b/2-\sqrt{bA_T}}}.
\]
The implied measure $\nu$ is then given by the sum of Dirac point masses:
\[
 \nu = k_1\beta\delta_{2\beta} + k_2\beta \delta_{\beta},\qquad\text{with}\qquad \beta=\sqrt{\frac{b}{A_T}}.
\]
Using \eqref{eq: h initial datum} and \eqref{eq: initial datum}, we in turn deduce that
\[
u_0'(x) = \left( \sqrt{ \frac{2}{k_1}x +\frac{k_2^2}{k_1^2} } - \frac{k_2}{k_1} \right)^{-\frac{1}{\beta}}. 
\]
Moreover, it easily follows that $\nu$ satisfies \eqref{eq: nu admissible}. Using \eqref{eq: thm 3 h} we then find
\[
 h(x,t) = \frac{k_1}{2}e^{2\beta x-2\beta^2t}+k_2e^{\beta x-\frac{1}{2}\beta^2t}.
\]
From there, one can apply formulae \eqref{eq: forward optimal wealth}, \eqref{eq: forward optimal portfolio}, and \eqref{eq: forward performance}, to find the optimal wealth process, the optimal investment policy that generates it, and the forward investment performance process that are consistent with the investor's preferences.

\end{example}

We conclude this section by considering one case where the range of the investor's wealth is the entire real line. Although we do not systematically consider investment problems in which wealth can become negative herein, we nevertheless provide an informal example. It was shown in \citet{mus-zar-2010} that representation results for the optimal policies and the forward investment performance process in terms of a Borel measure $\nu$ hold in the case of possibly negative wealth. These representation results are similar to those in subsection \ref{subsect: flex} in the case of nonnegative wealth.

\begin{example}\label{ex: 3}
Suppose the investor targets her wealth at time $T$ to have distribution function
\[
 F(y) = \Phi\left(\sqrt{1+1/A_T}H^{(-1)}(y)\right),
\]
where $\Phi$ is the standard normal distribution function, and $H\colon\bbR\to\bbR$ is given by
\[
 H(x)=\int_0^x e^{\frac{1}{2}z^2}dz.
\]
We assume that the investor's initial wealth is such that the budget constraint \eqref{eq: F budget constraint flexible} is satisfied. Note that $F$ satisfies the growth condition \eqref{eq: F inverse growth weaker} but not \eqref{eq: distribution growth condition}. Nevertheless, as mentioned in Remark \ref{rem: weaker growth condition F}, the conclusions of Theorem \ref{prop: nu gen F} hold.

After some tedious but straightforward calculations, we deduce via \eqref{eq: ft gen F} that the Fourier transform of the implied measure $\nu$ is given by
\[
\varphi_\nu(x)=e^{-\frac{1}{2}x^2}.
\]
This is the characteristic function of a standard normal random variable, and so $\nu(dy)=\frac{1}{\sqrt{2\pi}}e^{-\frac{1}{2}y^2}dy$. Note, however, that this measure $\nu$ violates condition \eqref{eq: nu admissible} for $t>0$ and satisfies only \eqref{eq: borel meas}. In this situation, one can work with the so-called local forward investment performance process (see \citet[section 2]{mus-zar-2010}).

\end{example}

\section{Comments and conclusions}\label{sect: conclusions}

\subsection{Time-consistency of distributional investment targets}

Besides the feasibility conditions we considered in sections \ref{sect: fixed} and \ref{sect: flexible}, it is natural to investigate whether an investor who desires a certain wealth profile at time $T_1$ can also choose a wealth profile at a different time $T_2, T_1\neq T_2$. The market model considered herein, however, is not general enough to allow for this to be done in an arbitrary way. Indeed, Theorems \ref{thm: classical utility infer pref}, \ref{thm: classical utility intermediate time}, and \ref{prop: nu gen F} demonstrate that, along with the investor's initial wealth and market input, the specification of a \emph{single} desired distribution for future wealth fully determines the investor's optimal wealth process at all times within the investor's investment horizon. Hence, the investor in the market considered herein is only permitted to choose a distribution for wealth at one future time, in both the fixed and flexible horizon settings. This choice determines her wealth process pointwise, and thus in distribution, at all other times.

\subsection{Role of initial wealth}

The investor's initial wealth $x_0$ plays an important, albeit subtle, role in our work. The choice of $x_0$ is arbitrary but fixed throughout the paper. The initial wealth, together with the investor's choice of distribution and market input, comprises the set of necessary inputs for the analysis. Indeed, the three inputs are interrelated via the budget constraints (see \eqref{eq: F budget constraint}, \eqref{eq: F budget constraint intermediate} and \eqref{eq: F budget constraint flexible}). Therefore, the set of distributions attainable in a given market environment depend strongly on the investor's initial wealth; varying the initial wealth generally results in a different set of attainable distributions.

\subsection{Conclusions and future directions}

Sharpe \emph{et al.} proposed the idea of having an expected utility maximizer choose a probability distribution for future wealth as an input to her investment problem instead of a utility function. The essence of their method is that an investor selects a desired probability distribution for future wealth and, subject to her initial wealth and market constraints, is then told the optimal policies and risk preferences consistent with that choice. We extended this normative approach to a continuous-time complete market framework with variable market coefficients. This results in added flexibility as to when the investor would like to realize her desired distribution as well as flexibility with the investment horizon itself.

Our method relies on being able to estimate the market price of risk, and one possible direction for future work is to address how to formulate and solve similar questions in a complete or incomplete market with stochastic market coefficients. We have also seen that the investor cannot arbitrarily choose multiple distributions for future wealth throughout the investment horizon in the model considered herein, regardless of whether she is a Merton investor or a forward investor with monontone performance criteria. Perhaps the selection of multiple distributions for future wealth can be done in a more general market model. Finally, another extension would be to consider a multi-period model, in the sense that the investor places a distribution for wealth at some future time $T_1$, invests optimally on $[0,T_1]$, and then at time $T_1$ selects another distribution for wealth to be placed at time $T_2>T_1$, having realized her wealth random variable at $T_1$ according to the previously chosen distribution. These are all subjects of ongoing research.

\section*{Acknowledgements}

The author would like to thank T. Zariphopoulou, M. S\^irbu, and G. \v Zitkovi\'c for their comments and suggestions.

\section*{Appendix: Proof of Proposition \ref{prop: classical utility heat equation} }

For completeness, we provide the proof of Proposition \ref{prop: classical utility heat equation}, which is an adaptation of the result of \citet{kal-2011} for the case of constant coefficients.

\begin{proof}
Under Assumptions \ref{assump: market price of risk} and \ref{assumption on utility}, it is well known (see, for example, \citet{kar-shr-1998}) that the optimal wealth process is given by $X_t^*=\psi(kZ_t,t)$, where the function $\psi\colon(0,\infty)\times[0,T]\to(0,\infty)$ is defined by
\begin{equation*}
 \psi(y,t) = \bbE_\bbQ\left[ I_T\left(y\frac{Z_T}{Z_t}\right)\right].
\end{equation*}
Herein, $\bbE_\bbQ$ denotes expectation under the equivalent martingale measure $\bbQ^T$ given by $\frac{d\bbQ^T}{d\bbP}=Z_T$ where $Z_T$ is as in \eqref{eq: state price density process}, while the Lagrange multiplier $k>0$ is the solution to
\begin{equation}\label{eq: budget}
 \bbE[Z_TI_T(kZ_T)]=x_0.
\end{equation}
Moreover, by the polynomial growth assumption \eqref{eq: I poly growth} on $I_T$ and the H\"{o}lder continuity of $|\lambda(t)|$, it is known (see \citet[Lemma 8.4 (p. 122)]{kar-shr-1998}) that $\psi\in C((0,\infty)\times[0,T])\cap C^{2,1}((0,\infty)\times[0,T))$ and solves the Cauchy problem
\begin{equation*}
 \left\{\begin{array}{ll} \psi_t(y,t)+\frac{1}{2}\abs{\lambda(t)}^2y^2\psi_{yy}(y,t)+\abs{\lambda(t)}^2y\psi_y(y,t)=0,& (y,t)\in (0,\infty)\times[0,T)\\
         \psi(y,T)=I_T(y),&y\in(0,\infty),
        \end{array}\right.
\end{equation*}
and that, for each $t\in[0,T)$, the function $y\mapsto \psi(y,t)$ is strictly decreasing.

Next, we define a function $h\colon\bbR\times[0,T]\to(0,\infty)$ by
\begin{equation*}
 h(x,t):= \psi(e^{-x+\frac{1}{2}A_t},t),
\end{equation*}
where $A_t$ is as in \eqref{eq: mkt inp proc}. Then 
\begin{equation}\label{eq: h orig tc}
h(x,T)=I_T\left( e^{-x+\frac{1}{2}A_T}\right). 
\end{equation}
Since the investor's optimal strategy is invariant under positive dilations of the argument of $I_T(\cdot)$ (by Remark \ref{rem: affine transformation}), we can assume the terminal condition is $h(x,T)=I_T(e^{-x})$.  We then have that $h\in C(\bbR\times[0,A_T])\cap C^{2,1}(\bbR\times[0,A_T))$ and solves
\begin{equation}\label{eq: classical utility h function}
 \left\{ \begin{array}{ll} h_t(x,t)+\frac{1}{2}|\lambda(t)|^2h_{xx}(x,t)=0,& (x,t)\in\bbR\times[0,T)\\
          h(x,T)=I_T\left( e^{-x}\right),& x\in\bbR.
         \end{array}\right.
\end{equation}
Let $h^{(-1)}$ denote the spatial inverse of $h$, which exists by the spatial monotonicity of $\psi$ and the relation $h_x(x,t) = -\psi_y(e^{-x+\frac{1}{2}A_t},t)e^{-x+\frac{1}{2}A_t}>0 ,\; (x,t)\in\bbR\times[0,T)$. Observe that by \eqref{eq: budget} we have $h\left(-\log(k),0\right)=\psi(k,0)=\bbE[Z_TI_T(kZ_T)] = x_0,$ and hence the underlying Lagrange multiplier satisfies
\begin{equation}\label{eq: hat y}
 k = e^{-h^{(-1)}(x_0,0)}.
\end{equation}
For $t\in[0,T]$, we then have
\begin{eqnarray}
 X_t^* &=& \psi(kZ_t,t)= \psi\left( e^{-h^{(-1)}(x_0,0)} e^{-M_t-\frac{1}{2}A_t},t\right)\nonumber\\
&=& \psi\left( e^{-( h^{(-1)}(x_0,0)+M_t+A_t)+\frac{1}{2}A_t},t\right)=h(h^{(-1)}(x_0,0)+M_t+A_t,t)\label{eq: classical utility opt wealth h},
\end{eqnarray}
and \eqref{eq: classical utility optimal wealth} follows.

Next, we recall the evolution of the optimal wealth process
\begin{equation}\label{eq: eu pf 1}
 dX_t^* = \sigma(t)\pi_t^*\cdot(\lambda(t)dt+dW_t),\qquad t\in[0,T].
\end{equation}
For $t\in[0,T]$, let $N_t:=h^{(-1)}(x_0,0)+M_t+A_t$ and observe that $N_t=h^{(-1)}\left(X_t^*,t\right),t\in[0,T],$ by \eqref{eq: classical utility optimal wealth}. By It\^{o}'s formula, the process $X_t^*, t\in[0,T]$, given in \eqref{eq: classical utility opt wealth h} satisfies
\begin{eqnarray}
 dX_t^*&=&  \left(h_t(N_t,t)+\frac{1}{2}\abs{\lambda(t)}^2h_{xx}(N_t,t)\right)dt + h_x(N_t,t)dN_t\nonumber\\
&=& \label{eq: eu pf 2} h_x\left( h^{(-1)}\left(X_t^*,t\right),t\right)\lambda(t)\cdot (\lambda(t)dt+dW_t).
\end{eqnarray}
Equating coefficients in \eqref{eq: eu pf 1} and \eqref{eq: eu pf 2}, we find that the optimal portfolio process $\pi_t^*$ is given by
\begin{equation*}
 \pi_t^* = h_x\left( h^{(-1)}\left(X_t^*,t\right),t\right)\sigma^{(-1)}(t)\lambda(t),\qquad t\in[0,T],
\end{equation*}
which yields the representation \eqref{eq: classical utility optimal portfolio} for the optimal portfolio process provided it is admissible. The admissibility is guaranteed by the polynomial growth assumption \eqref{eq: I poly growth} on $I_T$ and the uniform boundedness of $\lambda(t)$ on $[0,T]$ (see \citet[Theorem 3.5 (p. 93), and Remark 6.9(ii) (p. 97)]{kar-shr-1998}).
\end{proof}


\end{document}